\def\mytitle{The Burbea-Rao and Bhattacharyya centroids}
\def\jMEF{{\sc jMEF}}
\def\dF{\mathrm{d}F}
\def\equaldef{\stackrel{\mathrm{\footnotesize def}}{=}}
\title{\mytitle}
\author{Frank~Nielsen,~\IEEEmembership{Senior Member,~IEEE,}  and     Sylvain~Boltz,~\IEEEmembership{Nonmember,~IEEE}    
\thanks{F. Nielsen is with the Department
of Fundamental Research of Sony Computer Science Laboratories, Inc., Tokyo, Japan, and the Computer Science Department (LIX) of \'Ecole Polytechnique, Palaiseau, France. 
 e-mail: Frank.Nielsen@acm.org}
\thanks{S. Boltz is with the Computer Science Department (LIX) of \'Ecole Polytechnique, Palaiseau, France. e-mail: boltz@lix.polytechnique.fr}
\thanks{Manuscript received April 2010, revised April 2012. This revision includes in the appendix a proof of the uniqueness of the centroid.}
}
\def\dx{\mathrm{d}x}
\def\innerproduct#1#2{ \langle {#1},{#2} \rangle }
\def\Innerproduct#1#2{ \left\langle {#1},{#2} \right\rangle }
\def\BR{\mathrm{BR}}
\def\tr#1{\mathrm{tr}(#1)}
\def\JS{\mathrm{JS}}
\def\KL{\mathrm{KL}}
\def\E{\mathcal{E}}
\newtheorem{theorem}{Theorem}
\newtheorem{corollary}{Corollary}
\newtheorem{lemma}{Lemma}
\begin{document}
\maketitle

\begin{abstract}
We study the centroid with respect to the class of information-theoretic Burbea-Rao divergences that generalize the celebrated Jensen-Shannon divergence by measuring the non-negative Jensen difference induced by a strictly convex and differentiable function.
Although those Burbea-Rao divergences are symmetric by construction, they are not metric since they fail to satisfy the triangle inequality.
We first explain how a particular symmetrization of Bregman divergences called Jensen-Bregman distances yields exactly those Burbea-Rao divergences. 
We then proceed by defining skew Burbea-Rao divergences, and show that skew Burbea-Rao divergences amount in limit cases to compute Bregman divergences.
We then prove that Burbea-Rao centroids are unique, and can be arbitrarily finely approximated by a generic iterative concave-convex optimization algorithm with guaranteed convergence property.
In the second part of the paper, we consider the Bhattacharyya distance that is commonly used to measure overlapping degree of probability distributions.
We show that Bhattacharyya distances on members of the same statistical exponential family amount to calculate a Burbea-Rao divergence in disguise.
Thus we get an efficient algorithm for computing the Bhattacharyya centroid of a set of parametric distributions belonging to the same exponential families, improving over former specialized methods found in the literature that were limited to univariate or ``diagonal'' multivariate Gaussians.
To illustrate the performance of our Bhattacharyya/Burbea-Rao centroid algorithm, we present experimental performance results for  $k$-means and hierarchical clustering methods of Gaussian mixture models.  
\end{abstract}

\begin{IEEEkeywords}
Centroid, Kullback-Leibler divergence, Jensen-Shannon divergence,  Burbea-Rao divergence, Bregman divergences, Exponential families, Bhattacharrya divergence, Information geometry.
\end{IEEEkeywords}

\IEEEpeerreviewmaketitle

\def\P{\mathcal{P}}

\section{Introduction} 

\subsection{Means and centroids}
In Euclidean geometry, the centroid $c$ of a point set $\P=\{p_1, ..., p_n\}$ is defined as the center of mass $\frac{1}{n}\sum_{i=1}^n p_i$, also characterized as the center point that minimizes the {\it average squared} Euclidean distances: $c=\arg\min_p \sum_{i=1}^n \frac{1}{n} \|p-p_i\|^2$.
This basic notion of Euclidean centroid can be extended to denote a {\it mean} point $M(\P)$ representing the {\it centrality} of a given  point set $\P$.
There are basically two complementary approaches to define mean values of numbers: (1) by axiomatization, or (2) by optimization, summarized concisely as follows:

\begin{itemize}

\item {\bf By axiomatization}. 
This approach was first historically pioneered by the independent work of Kolmogorov~\cite{mean-kolmogorov-1930} and Nagumo~\cite{mean-nagumo-1930} in 1930, and simplified and refined later by Acz\'el~\cite{meanvalueaxiomatization-1948}.
Without loss of generality we consider the mean of two non-negative numbers $x_1$ and $x_2$, and postulate the following expected behaviors  of a mean function $M(x_1,x_2)$  as axioms (common sense): 
\begin{itemize}
\item Reflexivity.  $M(x, x)=x$, 
\item Symmetry. $M(x_1,x_2)=M(x_2,x_1)$,  
\item Continuity and strict monotonicity.   $M(\cdot,\cdot)$ continuous and $M(x_1,x_2)<M(x_1',x_2)$ for $x_1<x_1'$, and
\item Anonymity.  $M(M(x_{11},x_{12}),M(x_{21},x_{22})) = M(M(x_{11},x_{21}),M(x_{12},x_{22}))$  (also called bisymmetry expressing the fact that  the mean can be computed as a mean on the row means or equivalently  as a mean on the column means).
\end{itemize}

Then one can show that the mean function $M(\cdot,\cdot)$ is necessarily written  as:
\begin{equation}
M(x_1,x_2) = f^{-1}\left(\frac{f(x_1)+f(x_2)}{2}\right) \equaldef M_f(x_1,x_2),
\end{equation}
for a strictly increasing function $f$.
The arithmetic $\frac{x_1+x_2}{2}$, geometric $\sqrt{x_1x_2}$ and harmonic means $\frac{2}{\frac{1}{x_1}+\frac{1}{x_2}}$ 
are instances of such generalized means obtained for $f(x)=x$, $f(x)=\log x$ and $f(x)=\frac{1}{x}$, respectively.
Those generalized means are also called {\it quasi-arithmetic means}, since they can be interpreted as the arithmetic mean on the sequence $f(x_1), ..., f(x_n)$, the $f$-representation of numbers.
To get geometric centroids, we simply consider means on each coordinate axis independently.
The Euclidean centroid is thus interpreted as the Euclidean arithmetic mean. 
Barycenters (weighted centroids) are similarly obtained using non-negative weights (normalized  so that $\sum_{i=1}^n w_i=1$):

\begin{equation}
M_f(x_1, ..., x_n; w_1, ..., w_n) = f^{-1}\left( 
\sum_{i=1}^n w_i f(x_i)
\right)
\end{equation}

Those generalized means satisfy the inequality property: 
\begin{equation}\label{eq:dom}
M_f(x_1, ..., x_n; w_1, ..., w_n) \leq  M_g(x_1, ..., x_n; w_1, ..., w_n),
\end{equation} if and only if function $g$ dominates $f$: That is, $\forall x, g(x)>f(x)$.
Therefore the arithmetic mean ($f(x)=x$) dominates the geometric mean ($f(x)=\log x$) which in turn dominates the harmonic mean $f(x)=\frac{1}{x}$. 
Note that it is {\it not} a strict inequality in Eq.~\ref{eq:dom} as the means coincide for all identical elements: 
if all $x_i$ are equal to $x$ then $M_f(x_1, ..., x_n)=f^{-1}(f(x))=x=g^{-1}(g(x))=M_g(x_1, ..., x_n)$.
All those quasi-arithmetic means further satisfy the ``interness'' property 

\begin{equation}
\min(x_1, ..., x_n) \leq M_f(x_1, ..., x_n) \leq \max(x_1, ..., x_n),
\end{equation}
derived from limit cases $p\rightarrow\pm\infty$ of power means\footnote{Besides the min/max operators interpreted as extremal power means, the geometric mean itself can also be interpreted as a power mean  
$(\prod_{i=1}^n x_i^p)^{\frac{1}{p}}$ in the limit case $p\rightarrow 0$. } for $f(x)=x^p, p\in\mathbb{R}_*=(-\infty,\infty)\backslash\{0\}$, a non-zero real number.

\item {\bf By optimization}. In this second alternative approach, the barycenter $c$ is defined according to a distance function $d(\cdot,\cdot)$ as the optimal solution of a minimization problem

\begin{equation}
\mathrm{(OPT)}: \min_{x} \sum_{i=1}^n w_i d(x,p_i) = \min_{x} L(x; \P, d),
\end{equation}

where the non-negative weights $w_i$ denote multiplicity or relative importance of points (by default, the centroid is defined by fixing all  $w_i=\frac{1}{n}$).
Ben-Tal et al.~\cite{entropicmeans-1989} considered an information-theoretic class of distances called $f$-divergences~\cite{AliSilvey-1966,Csiszar-1967}:

\begin{equation}
I_f(x,p) = p f\left( \frac{x}{p} \right),
\end{equation}
for a strictly convex differentiable function $f(\cdot)$ satisfying $f(1)=0$ and $f'(1)=0$.
Although those $f$-divergences were primarily investigated for probability measures,\footnote{In that context, a $d$-dimensional point is interpreted as a discrete and finite probability measure lying in the $(d-1)$-dimensional unit simplex.} we can extend the $f$-divergence to positive measures.
Since program (OPT) is  {\it strictly convex} in $x$, it admits a {\it unique} minimizer $M(\P;I_f)=\arg\min_{x} L(x; \P, I_f)$, termed the {\it entropic mean} by Ben-Tal et al.~\cite{entropicmeans-1989}.
Interestingly, those entropic means are linear scale-invariant:\footnote{That is, means of homogeneous degree $1$.} 

\begin{equation}
M(\lambda p_1, ..., \lambda p_n ; I_f)  = \lambda M(p_1, ..., p_n; I_f)  
\end{equation}
Nielsen and Nock~\cite{2009-BregmanCentroids-TIT} considered another class of information-theoretic distortion measures $B_F$ called Bregman divergences~\cite{CensorZenios-1997,WainwrightJordan-2008}:
\begin{equation}
B_F(x,p) = F(x)-F(p)-(x-p) F'(p),
\end{equation}
for a strictly convex differentiable function $F$.
It follows that (OPT) is convex, and admits a unique minimizer $M(p_1, ..., p_n; B_F)=M_{F'}(p_1, ..., p_n)$, a quasi-arithmetic mean for the strictly increasing and continuous function $F'$, the derivative of $F$. 
Observe that information-theoretic distances may be asymmetric (i.e., $d(x,p)\not = d(p,x)$), 
and therefore one may also define a {\it right-sided} centroid $M'$ as the minimizer of

\begin{equation}
\mathrm{(OPT')}: \min_{x} \sum_{i=1}^n w_i d(p_i,x),
\end{equation}

It turns out that for $f$-divergences, we have: 
\begin{equation}
I_f(x,p)=I_{f*}(p,x),
\end{equation} for $f^*(x)=x f(1/x)$ so that (OPT') is solved as a (OPT) problem for the {\it conjugate function} $f^*(\cdot)$.
In the same spirit, we have:
\begin{equation}
B_F(x,p)=B_{F^*}(F'(p),F'(x))
\end{equation}
for Bregman divergences, where $F^*$ denotes the Legendre convex conjugate~\cite{CensorZenios-1997,WainwrightJordan-2008}.\footnote{Legendre dual convex conjugates $F$ and $F^*$ have necessarily reciprocal gradients: ${F^*}'=(F')^{-1}$. See~\cite{2009-BregmanCentroids-TIT}.} 
Surprisingly, although (OPT') may {\it not} be convex in $x$ for Bregman divergences (e.g., $F(x)=-\log x$), (OPT') admits nevertheless a unique minimizer, independent of the generator function $F$: the center of mass $M'(\P;B_F) = \sum_{i=1}^n \frac{1}{n} p_i$.
Bregman means are not homogeneous except for the power generators $F(x)=x^p$ which yields entropic means, i.e. means that can {\it also} be interpreted\footnote{In fact, Amari~\cite{alphaunique-2009} proved that the intersection of the class of $f$-divergences with the class of Bregman divergences are $\alpha$-divergences.} as minimizers of average $f$-divergences~\cite{entropicmeans-1989}.
Amari~\cite{amari-2007} further studied those power means (known as $\alpha$-means in information geometry~\cite{informationgeometry-2000}), and showed that they are linear-scale free means obtained as minimizers of $\alpha$-divergences, a proper subclass of $f$-divergences.
Nielsen and Nock~\cite{isvd2009} reported an alternative simpler proof of $\alpha$-means by showing that the $\alpha$-divergences are Bregman divergences in disguise  (namely, representational Bregman divergences for positive measures, but not for normalized distribution measures~\cite{alphaunique-2009}). 
To get geometric centroids, we simply consider multivariate extensions of the optimization task (OPT).
In particular, one may consider {\it separable} divergences that are divergences that can be assembled coordinate-wise:
\begin{equation}
d(x,p)= \sum_{i=1}^d d_i(x^{(i)}, p^{(i)}),
\end{equation}
 with $x^{(i)}$ denoting the $i$th coordinate. 
A typical non separable divergence is the squared Mahalanobis distance~\cite{BregmanVoronoi-2010}:
\begin{equation}
d(x,p) = (x-p)^T Q (x-p),
\end{equation}
a Bregman divergence called generalized quadratic distance, defined for the generator $F(x)=x^T Q x$, where $Q$ is a positive-definite matrix ($Q\succ 0$).
For separable distances, the optimization problem (OPT) may then be reinterpreted as the task of finding the projection~\cite{meangenproj-1995} of a point $p$ (of dimension $d\times n$) to the upper line $U$:

\begin{equation}
\mathrm{(PROJ)} : \inf_{u\in U} d(u,p)
\end{equation}
with $u_1 =... = u_{d\times n} >0$, and $p$ the $(n\times d)$-dimensional point obtained by stacking the $d$ coordinates of each of the $n$ points.
\end{itemize}

In geometry, means (centroids) play a crucial role in center-based clustering (i.e., $k$-means~\cite{bregmankmeans-2005} for vector quantization applications).
Indeed, the mean of a cluster allows one to {\it aggregate data} into a single center datum. 
Thus the notion of means are encapsulated into the broader theory of mathematical aggregators~\cite{Detyniecki-mathematicalaggregation-2000}. 

Results on geometric means can be easily transfered to the field of Statistics~\cite{entropicmeans-1989} by generalizing the optimization problem task to a random variable $X$ with distribution $F$ as:

\begin{equation}
\mathrm{(OPT)}: \min_x E[X d(x,X)] = \min_x \int_t  t d(x,t) \dF(t),
\end{equation}
where $E[\cdot]$ denotes the expectation defined with respect to the Lebesgue-Stieltjes integral.
Although this approach is discussed in~\cite{entropicmeans-1989} and important for defining various notions of centrality in statistics, we shall not cover this extended framework here, for sake of brevity.

\subsection{Burbea-Rao divergences}
In this paper, we focus on the optimization approach (OPT) for defining other (geometric) means using the  class of information-theoretic distances obtained by Jensen difference for a strictly convex and differentiable function $F$:

\begin{equation}
d(x,p) = \frac{F(x)+F(p)}{2} - F\left( \frac{x+p}{2} \right) \equaldef\BR_F(x,p) \geq 0.
\end{equation}

Since the underlying differential geometry implied by those Jensen difference distances have been seminally studied in  papers of Burbea and Rao~\cite{BurbeaRao-1982,BurbeaRao-higherorder-1982}, we shall term them Burbea-Rao divergences, and point out to them as $\BR_F$.
In the remainder, we consider separable Burbea-Rao divergences.
That is, for $d$-dimensional points $p$ and $q$, we define
\begin{equation}
\BR_F(p,q) = \sum_{i=1}^d \BR_F(p^{(i)},q^{(i)}),
\end{equation}
and study the Burbea-Rao centroids (and barycenters) as the minimizers of the average Burbea-Rao divergences.
Those Burbea-Rao divergences generalize the celebrated Jensen-Shannon divergence~\cite{Jensen-Shannon-divergence} 
\begin{equation}
\JS(p,q)= H\left(\frac{p+q}{2}\right) - \frac{H(p)+H(q)}{2}
\end{equation}
by choosing $F(x)=-H(x)$, the negative Shannon entropy $H(x)=-x\log x$. 
Generators $F(\cdot)$ of parametric distances are convex functions representing entropies which are concave functions.
Burbea-Rao divergences contain all generalized quadratic distances ($F(x)=x^T Q x=\innerproduct{Qx}{x}$ for a positive definite matrix $Q\succ 0$, also called squared Mahalanobis distances):

\begin{eqnarray*}
\BR_F(p,q) & = & \frac{F(p)+F(q)}{2} - F\left(\frac{p+q}{2}\right)  \\
& = & \frac{2\innerproduct{Qp}{p}+ 2\innerproduct{Qq}{q}- \innerproduct{Q(p+q)}{p+q} }{4} \\
& = & \frac{1}{4} (\innerproduct{Qp}{p}+\innerproduct{Qq}{q}-2\innerproduct{Qp}{q}) \\
& = & \frac{1}{4} \innerproduct{Q(p-q)}{p-q} = \frac{1}{4} \| p-q \|^2_Q.
\end{eqnarray*}

Although the square root of the Jensen-Shannon divergence yields a metric (a Hilbertian metric), it is not true in general for Burbea-Rao divergences.
The closest work to our paper is a $1$-page symposium\footnote{In the nineties, the IEEE International Symposium on Information Theory (ISIT) published only $1$-page papers. We are grateful to Prof. Mich\`ele Basseville for sending us the corresponding slides.} paper~\cite{entdivmean-1995} discussing about  Ali-Silvey-Csisz\'ar $f$-divergences~\cite{AliSilvey-1966,Csiszar-1967} and Bregman divergences~\cite{Bregman67,CensorZenios-1997} (two entropy-based divergence classes).
Those information-theoretic distortion classes are compared using quadratic differential metrics, mean values and projections.
The notion of skew Jensen differences intervene in the discussion.

\subsection{Contributions and paper organization}
The paper is articulated into two parts: 
The first part studies the Burbea-Rao centroids, and the second part shows some applications in Statistics.
We summarize our contributions as follows:

\begin{itemize}
\item We define the parametric class of (skew) Burbea-Rao divergences, and show that those divergences naturally arise when generalizing the principle of the Jensen-Shannon divergence~\cite{Jensen-Shannon-divergence} to Jensen-Bregman divergences. In the limit cases, we further prove that those skew Burbea-Rao divergences yield asymptotically Bregman divergences.

\item We show that the centroids with respect to the (skew) Burbea-Rao divergences are unique.
Besides centroids for special cases of Burbea-Rao divergences (including the squared Euclidean distances), those centroids are not available in closed-form equations.
However, we show that any Burbea-Rao centroid can be estimated efficiently using an iterative convex-concave optimization procedure.  
As a by-product, we find Bregman sided centroids~\cite{2009-BregmanCentroids-TIT} in closed-form in the extremal skew cases.
\end{itemize}

We then consider applications of Burbea-Rao centroids in Statistics, and show the link with Bhattacharyya distances.
A wide class of statistical parametric models can be handled  in a unified manner as  exponential families~\cite{ef-flashcards-2009}.
The classes of exponential families contain many of the standard parametric models including the Poisson, Gaussian, multinomial, and Gamma/Beta distributions, just to name a few prominent members. 
However, only a few closed-form formulas for the statistical Bhattacharyya distances  between those densities are reported in the literature.\footnote{For instance, the Bhattacharyya distance between multivariate normal distributions is given here~\cite{Fukunaga90}.}

For the second part, our contributions are reviewed as follows:

\begin{itemize}
\item We show that the (skew) Bhattacharyya distances calculated for distributions belonging to the same exponential family in statistics, are  equivalent to (skew) Burbea-Rao divergences. We mention corresponding closed-form formula for computing Chernoff coefficients and $\alpha$-divergences of exponential families. In the limit case, we obtain an alternative proof showing that the Kullback-Leibler divergence of members of the same exponential family is equivalent to a Bregman divergence calculated on the natural parameters~\cite{BregmanVoronoi-2010}.

\item We approximate iteratively the Bhattacharyya centroid of any set of distributions of the same exponential family (including multivariate Gaussians) using the Burbea-Rao centroid algorithm. For the case of multivariate Gaussians, we design yet another tailored iterative scheme based on matrix differentials, generalizing the former univariate study of Rigazio et al.~\cite{BattGaussian}. Thus we get either the generic way or the tailored way for computing the Bhattacharrya centroids of arbitrary Gaussians.

\item As a field application, we show how to simplify Gaussian mixture models using hierarchical clustering, 
and show experimentally that the results obtained with the Bhattacharyya centroids compare favorably well with former results obtained for Bregman centroids~\cite{2010-Hierachical-ICCASP}. Our numerical experiments show that the generic method outperforms the alternative tailored method for multivariate Gaussians.
\end{itemize}

The paper is organized as follows: 
In section~\ref{sec:jb}, we introduce Burbea-Rao divergences as a natural extension of the Jensen-Shannon divergence using the framework of Bregman divergences. 
It is followed by Section~\ref{sec:sbr} which considers the general case of skew divergences, and reveals asymptotic behaviors of extreme skew Burbea-Rao divergences as Bregman divergences.
Section~\ref{sec:skewc} defines the (skew) Burbea-Rao centroids, show they are unique, and present a simple iterative algorithm with guaranteed convergence.
We then consider applications in Statistics in Section~\ref{sec:bhatbj}: After briefly recalling exponential distributions in \S\ref{sec:ef}, we show that Bhattacharyya distances and Chernoff/Amari $\alpha$-divergences are available in closed-form equations as Burbea-Rao divergences for distributions of the same exponential families. Section~\ref{sec:alt} presents an alternative iterative algorithm tailored to compute the Bhattacharyya centroid of multivariate Gaussians, generalizing the former specialized work of Rigazio et al.~\cite{BattGaussian}. 
In section~\ref{sec:clustering}, we use those Bhattacharyya/Burbea-Rao centroids to simplify hierarchically Gaussian mixture models, and comment both qualitatively and quantitatively our experiments on a color image segmentation application. 
Finally, section~\ref{sec:concl} concludes this paper by  describing  further perspectives and hinting at some information geometrical aspects of this work.

\section{Burbea-Rao divergences from symmetrization of Bregman divergences}\label{sec:jb}

Let $\mathbb{R}^+=[0,+\infty)$ denote the set of non-negative reals.
For a strictly convex (and differentiable) generator $F$, we define the Burbea-Rao divergence as the following non-negative function:

\begin{eqnarray*}
\BR_F &: & \mathcal{X}\times \mathcal{X} \rightarrow \mathbb{R}^+ \\
(p,q) &\mapsto & \BR_F(p,q)= \frac{F(p)+F(q)}{2} - F\left(\frac{p+q}{2}\right) \geq 0
\end{eqnarray*}
The non-negative property of those divergences follows straightforwardly from Jensen inequality.
Although Burbea-Rao distances are symmetric ($\BR_F(p,q)=\BR_F(q,p)$), they are not metrics since they fail to satisfy the triangle inequality.
A geometric interpretation of those divergences is given in Figure~\ref{fig:br}.
Note that $F$ is defined up to an affine term $ax+b$.

\begin{figure}
\centering
\includegraphics[width=\columnwidth]{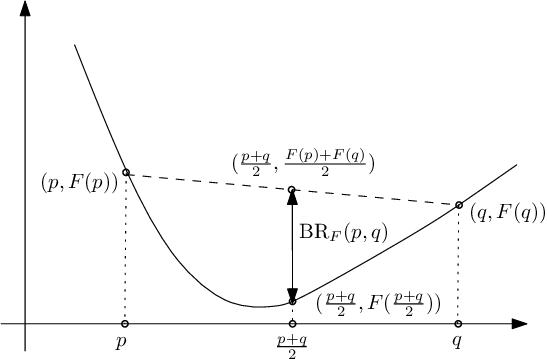}
\caption{Interpreting the Burbea-Rao divergence $\BR_F(p,q)$ as the vertical distance between the midpoint of segment $[(p,F(p)) , (q,F(q))]$ and the midpoint of the graph plot $\left(\frac{p+q}{2},F\left(\frac{p+q}{2}\right)\right)$. \label{fig:br} }
\end{figure}

We show that Burbea-Rao divergences extend the Jensen-Shannon divergence using the broader concept of Bregman divergences instead of the Kullback-Leibler divergence.
A Bregman divergence~\cite{Bregman67,CensorZenios-1997,WainwrightJordan-2008} $B_F$ is defined as the positive tail of the first-order Taylor expansion of a strictly convex and differentiable convex function $F$:

\begin{equation}
B_F(p,q)=F(p)-F(q)-\innerproduct{p-q}{\nabla F(q)},
\end{equation}
where $\nabla F$ denote the gradient of $F$ (the vector of partial derivatives $\{\frac{\partial F}{\partial x_i}\}_{i}$), and $\innerproduct{x}{y}=x^Ty$ the inner product (dot product for vectors).
A Bregman divergence is interpreted geometrically~\cite{BregmanVoronoi-2010} as the vertical distance between the tangent plane $H_q$ at $q$ of the graph plot $\mathcal{F}=\{ \hat x = (x,F(x))\ | x\in\mathcal{X}\}$ and its translates $H_q'$ passing through $\hat p=(p,F(p))$. 
Figure~\ref{fig:bd} depicts graphically the geometric interpretation of the Bregman divergence (to be compared with the Burbea-Rao divergence in Figure~\ref{fig:br}).

\begin{figure}
\centering
\includegraphics[width=\columnwidth]{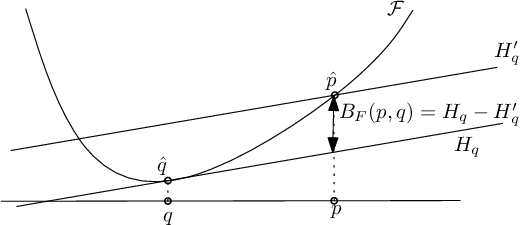}
\caption{Interpreting the Bregman divergence $B_F(p,q)$ as the vertical distance between the tangent plane at $q$ and its translate passing through $p$ (with identical slope $\nabla F(q)$).\label{fig:bd}}
\end{figure}

Bregman divergences are never metrics, and symmetric only for the generalized quadratic distances~\cite{BregmanVoronoi-2010} obtained by choosing $F(x)=x^T Q x$, for some positive definite matrix $Q\succ 0$.
Bregman divergences allow one to encapsulate both statistical distances with geometric distances: 

\begin{itemize} 

\item Kullback-Leibler divergence obtained for $F(x)=x\log x$:
\begin{equation}
\KL(p,q)= \sum_{i=1}^d p^{(i)}\log \frac{p^{(i)}}{q^{(i)}}
\end{equation}

\item squared Euclidean distance obtained for $F(x)=x^2$:
\begin{equation}
L_2^2(p,q)= \sum_{i=1}^d (p^{(i)}-q^{(i)})^2 = \| p-q\|^2
\end{equation}
\end{itemize}

Basically, there are two ways to symmetrize Bregman divergences (see also work on Bregman metrization~\cite{metricbregman1-2008,metricbregman2-2008}):

\begin{itemize}

\item {\bf Jeffreys-Bregman divergences.} We consider half of the double-sided divergences:

\begin{eqnarray}
S_F(p;q) &= & \frac{B_F(p,q)+B_F(q,p)}{2}\\
& =& \frac{1}{2}\innerproduct{p-q}{\nabla F(p)-\nabla F(q)},
\end{eqnarray} 

Except for the generalized quadratic distances, this symmetric distance {\it cannot} be interpreted as a Bregman divergence~\cite{BregmanVoronoi-2010}.

\item {\bf Jensen-Bregman divergences.} We consider the Jeffreys-Bregman divergences from the source parameters to the average parameter $\frac{p+q}{2}$ as follows:

\begin{eqnarray}
J_F(p;q) &= & \frac{ B_F(p,\frac{p+q}{2}) + B_F(q,\frac{p+q}{2}) }{2} \\
& = & \frac{F(p)+F(q)}{2}-F(\frac{p+q}{2})  = \BR_F(p,q)\nonumber
\end{eqnarray}

\end{itemize}

Note that even for the negative Shannon entropy $F(x)=x\log x-x$ (extended to positive measures), those two symmetrizations yield different divergences: While $S_F$ uses the gradient $\nabla F$, $J_F$ relies only on the generator $F$.
Both $J_F$ and $S_F$ have always finite values.\footnote{This may not be the case of Bregman/Kullback-Leibler divergences that can potentially be unbounded.}
The first symmetrization approach was historically studied by Jeffreys~\cite{Jeffreys46}.

The second way to symmetrize Bregman divergences generalizes the spirit of the Jensen-Shannon divergence~\cite{Jensen-Shannon-divergence}
\begin{eqnarray}
\JS(p,q) & = &\frac{1}{2} \left( \KL\left(p , \frac{p+q}{2}\right) + \KL\left(q , \frac{p+q}{2}\right) \right), \\
 & = & H\left(\frac{p+q}{2}\right) - \frac{H(p)+H(q)}{2}
\end{eqnarray}
 with non-negativity that can be derived from Jensen's inequality, hence its name.
The Jensen-Shannon divergence is also called the total divergence to the average, a generalized measure of diversity from the {\it population distributions} $p$ and $q$ to the {\it average population} $\frac{p+q}{2}$.
Those Jensen difference-type divergences are by definition Burbea-Rao divergences.
For the Shannon entropy, those two different information divergence symmetrizations (Jensen-Shannon divergence and Jeffreys $J$ divergence) satisfy the following inequality:
\begin{equation}
J(p,q) \geq 4\ \JS(p,q) \geq 0.
\end{equation}
Nielsen and Nock~\cite{2009-BregmanCentroids-TIT} investigated the centroids with respect to Jeffreys-Bregman divergences (the symmetrized Kullback-Leibler divergence).


\section{Skew Burbea-Rao divergences}\label{sec:sbr}

We further generalize Burbea-Rao divergences by introducing a positive weight $\alpha\in (0,1)$ when averaging source parameters $p$ and $q$ as follows:
\begin{eqnarray*}
\BR_F^{(\alpha)} &: & \mathcal{X}\times \mathcal{X} \rightarrow \mathbb{R}^+ \\
  \BR_F^{(\alpha)}(p,q) & = & \alpha F(p)+(1-\alpha) F(q) - F(\alpha p+(1-\alpha) q) 
\end{eqnarray*}
We consider the open interval $(0,1)$ since otherwise the divergence has no discriminatory power (indeed, for $\alpha\in\{0,1\},   \BR_F^{(\alpha)}(p,q)=0,\ \forall p,q$).
Although skewed divergences are asymmetric $\BR_F^{(\alpha)}(p,q)\not = \BR_F^{(\alpha)}(q,p)$, we can swap arguments by replacing $\alpha$ by $1-\alpha$:

\begin{eqnarray}
\BR_F^{(\alpha)}(p,q) & = & \alpha F(p) + (1-\alpha) F(q) - F(\alpha p+(1-\alpha) q) \nonumber\\
& = & \BR_F^{(1-\alpha)}(q,p) 
\end{eqnarray}
Those skew Burbea-Rao divergences are similarly found using a skew Jensen-Bregman counterpart (the gradient terms $\nabla F(\alpha p+(1-\alpha)q)$  perfectly cancel in the sum of skew Bregman divergences):

{
\begin{eqnarray*}
\alpha B_F(p,\alpha p+(1-\alpha)q)+(1-\alpha)B_F(q,\alpha p+(1-\alpha)q) \equaldef\\
 \BR_F^{(\alpha)}(p,q) 
\end{eqnarray*}
}

%
%

In the limit cases, $\alpha\rightarrow 0$ or $\alpha\rightarrow 1$, we have $\BR_F^{(\alpha)}(p,q)\rightarrow 0\ \forall p,q$.
That is, those divergences loose their discriminatory power at extremities.
However, we show that those skew Burbea-Rao divergences tend {\it asymptotically} to Bregman divergences:

\begin{eqnarray}
B_F(p,q) & = & \lim_{\alpha\rightarrow 0} \frac{1}{\alpha} \BR_F^{(\alpha)}(p,q) \label{eq:asbreg2}\\
B_F(q,p) & = & \lim_{\alpha\rightarrow 1} \frac{1}{1-\alpha} \BR_F^{(\alpha)}(p,q) \label{eq:asbreg}
\end{eqnarray}

The limit in the right-hand-side of Eq.~\ref{eq:asbreg} can be expressed alternatively as the following one-sided limit:

\begin{equation}\label{eq:lim}
\lim_{\alpha\uparrow 1} \frac{1}{1-\alpha} \BR_F^{(\alpha)}(p,q) = \lim_{\alpha\downarrow 0} \frac{1}{\alpha} \BR_F^{(\alpha)}(q,p),
\end{equation}
where the arrows $\uparrow$ and $\downarrow$ denote  the limit from the left and the limit from the right, respectively (see~\cite{LieseVajda-2006} for notations). The right derivative of a function $f$ at $x$ is defined as $f'_+(x)=\lim_{y\downarrow x}  \frac{f(y)-f(x)}{y-x}$.
Since $\BR_F^{(0)}(p,q)=0\ \forall p,q$, it follows that the right-hand-side limit of Eq.~\ref{eq:lim}  is the  right derivative (see Theorem~1 of~\cite{LieseVajda-2006} that gives a generalized Taylor expansion of convex functions) of the map 

\begin{equation}
L(\alpha): \alpha \mapsto \BR_F^{(\alpha)}(q,p)
\end{equation}
taken at $\alpha=0$.
Thus we have

\begin{equation}
\lim_{\alpha\downarrow 0} \frac{1}{\alpha} \BR_F^{(\alpha)}(q,p) = L'_+(0).,
\end{equation}
with

\def\dalpha{\mathrm{d}\alpha}

\begin{eqnarray}
L'_+(0) & = & \frac{\mathrm{d}_+}{\dalpha} (\alpha F(q)+(1-\alpha) F(p)-F(\alpha q+(1-\alpha)p))\nonumber\\
& = & F(q)-F(p)-\innerproduct{q-p}{\nabla F(p)} \\
& = & B_F(q,p)
\end{eqnarray}

\begin{lemma}
Skew Burbea-Rao divergences tend asymptotically to Bregman divergences ($\alpha\rightarrow 0$) or reverse Bregman divergences ($\alpha\rightarrow 1$).
\end{lemma}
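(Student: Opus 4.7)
The plan is to reduce both asymptotic statements to one-sided derivatives at the endpoints of the parameter interval $[0,1]$. First I would verify the boundary behavior directly from the definition: substituting $\alpha=0$ gives $\BR_F^{(0)}(p,q)=F(q)-F(q)=0$, and substituting $\alpha=1$ gives $\BR_F^{(1)}(p,q)=F(p)-F(p)=0$. Consequently the ratios $\BR_F^{(\alpha)}(p,q)/\alpha$ and $\BR_F^{(\alpha)}(p,q)/(1-\alpha)$ are exactly the forward difference quotients at $\alpha=0$ and backward difference quotients at $\alpha=1$ of the map $L(\alpha)=\BR_F^{(\alpha)}(p,q)$.

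Next I would differentiate the closed form of $L(\alpha)$ with respect to $\alpha$, using the chain rule on the term $F(\alpha p+(1-\alpha)q)$, obtaining
\begin{equation*}
L'(\alpha) = F(p)-F(q)-\Innerproduct{p-q}{\nabla F(\alpha p+(1-\alpha)q)}.
\end{equation*}
Evaluating at $\alpha=0^+$ yields $F(p)-F(q)-\innerproduct{p-q}{\nabla F(q)}=B_F(p,q)$, establishing Eq.~(\ref{eq:asbreg2}). For Eq.~(\ref{eq:asbreg}) I would either evaluate $L'(\alpha)$ at $\alpha=1^-$, getting $F(p)-F(q)-\innerproduct{p-q}{\nabla F(p)}$, and then account for the $(1-\alpha)$ in the denominator (which flips the sign) to recover $B_F(q,p)$; or, more transparently, apply the swap identity already noted in the excerpt,
\begin{equation*}
\lim_{\alpha\uparrow 1}\frac{\BR_F^{(\alpha)}(p,q)}{1-\alpha} = \lim_{\beta\downarrow 0}\frac{\BR_F^{(\beta)}(q,p)}{\beta},
\end{equation*}
and reuse the $\alpha\to 0$ computation with $(p,q)$ interchanged. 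This is essentially the calculation the excerpt has already carried out for the reverse case.

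The main (and fairly mild) obstacle is justifying that the one-sided derivatives at $\alpha\in\{0,1\}$ exist and agree with the naive computation. Since $F$ is assumed differentiable on its domain and the expression $\alpha F(p)+(1-\alpha)F(q)-F(\alpha p+(1-\alpha)q)$ extends smoothly to an open neighborhood of the closed interval $[0,1]$ in $\alpha$ (viewing $\alpha p+(1-\alpha)q$ as a segment in $\mathcal{X}$, which one may assume to lie in the interior of $\mathrm{dom}\,F$ as is standard), the classical two-sided derivative exists and coincides with both one-sided limits. No appeal to convexity of $F$ is needed for the limit itself; convexity only enters to guarantee non-negativity of $\BR_F^{(\alpha)}$. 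The same argument handles the vector-valued separable case coordinatewise, and so the lemma follows.
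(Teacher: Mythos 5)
Your proposal is correct and follows essentially the same route as the paper: since $\BR_F^{(0)}=\BR_F^{(1)}=0$, the scaled limits are one-sided derivatives of $L(\alpha)=\BR_F^{(\alpha)}(p,q)$ at the endpoints, computed by differentiating the explicit formula via the chain rule (the paper does the $\alpha\downarrow 0$ case for the swapped arguments and invokes the same swap identity, justifying the one-sided derivative via the Liese--Vajda Taylor-type expansion rather than plain smoothness of $F$, a negligible difference). Your direct evaluation at $\alpha=1^-$ with the sign flip is a harmless variant of the same computation.
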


Thus we may scale skew Burbea-Rao divergences so that Bregman divergences belong to skew Burbea-Rao divergences:

\begin{eqnarray}
\lefteqn{
\mathrm{sBR}_F^{(\alpha)}(p,q)  =}\nonumber\\
&& \frac{1}{\alpha(1-\alpha)} \left( \alpha F(p) + (1-\alpha) F(q) - F(\alpha p+(1-\alpha) q) \right)\nonumber\\
\ 
\end{eqnarray}

Moreover, $\alpha$ is now not anymore restricted to $(0,1)$ but to the full real line: $\alpha\in\mathbb{R}$, as also noticed in~\cite{zhang-2004}.
Setting $\alpha=\frac{1-\alpha'}{2}$ (that is, $\alpha'=1-2\alpha$), we get

\begin{eqnarray}
\lefteqn{
\mathrm{sBR}_F^{(\alpha')}(p,q)  =}\nonumber\\
&& \frac{4}{1-\alpha'^2} \left( \frac{1-\alpha'}{2} F(p) + \frac{1+\alpha'}{2} F(q) - F\left( \frac{1-\alpha'}{2}p+\frac{1+\alpha'}{2}q \right) \right)\nonumber\\
\ 
\end{eqnarray}

\section{Burbea-Rao centroids}\label{sec:skewc}

Let $\P=\{p_1, ..., p_n\}$ denote a $d$-dimensional point set.
To each point, let us further associate a positive weight $w_i$ (accounting for arbitrary multiplicity) and a positive scalar $\alpha_i\in (0,1)$ to define an anchored distance $\BR_F^{(\alpha_i)}(\cdot,p_i)$. 
Define the skew Burbea-Rao\footnote{We also call them skew Jensen barycenters or centroids since they are induced by a divergence using the Jensen inequality.} barycenter (or centroid) $c$ as the minimizer of the following optimization task:

\begin{equation} 
\mathrm{OPT}: c=\arg\min_x \sum_{i=1}^n w_i \BR_F^{(\alpha_i)}(x,p_i) = \arg\min_x L(x)
\end{equation}

Without loss of generality, we consider argument $x$ on the left argument position (otherwise, we change all $\alpha_i\rightarrow 1-\alpha_i$ to get the right-sided Burbea-Rao centroid).
Removing all terms independent of $x$, the minimization program (OPT) amounts to minimize equivalently the following energy function:

\begin{equation}
E(c) = (\sum_{i=1}^n {w_i\alpha_i})F(c)  - \sum_{i=1}^n w_i F(\alpha_i c+ (1-\alpha_i) p_i)
\end{equation}

Observe that the energy function is decomposable in the sum of a convex function $(\sum_{i=1}^n {w_i\alpha_i})F(c)$ with a concave function $- \sum_{i=1}^n w_i F(\alpha_i c+ (1-\alpha_i) p_i)$ (since the sum of $n$ concave functions is concave).
We can thus solve iteratively this optimization problem using the Convex-ConCave Procedure~\cite{YuilleNC03,SriperumbudurNIPS09} (CCCP), by starting from an initial position $c_0$ (say, the barycenter $c_0=\sum_{i=1}^n w_i p_i$), and iteratively update the barycenter as follows:

\begin{equation}
\nabla F(c_{t+1})  =  \frac{1}{ \sum_{i=1}^n {w_i\alpha_i} } \sum_{i=1}^n w_i\alpha_i \nabla F \left(\alpha_i c_t+ (1-\alpha_i) p_i\right) 
\end{equation}
\begin{equation}\label{eq:gm}
c_{t+1} =   \nabla F^{-1} \left(
\frac{1}{ \sum_{i=1}^n {w_i\alpha_i} } \sum_{i=1}^n w_i\alpha_i \nabla F \left(\alpha_i c_t+ (1-\alpha_i) p_i\right)
\right)
\end{equation}

Since $F$ is convex, the second-order derivative $\nabla ^2F$ is always positive definite, and $\nabla F$ is strictly monotone increasing.
Thus we can interpret Eq.~\ref{eq:gm} as a fixed-point equation by considering the $\nabla F$-representation.
Each iteration is interpreted as a quasi-arithmetic mean.
This proves that the Burbea-Rao centroid is always well-defined and unique (see Appendix~\ref{sec:proof} for a detailed proof), since there is (at most) a unique fixed point for $x=g(x)$ with a function $g(\cdot)$ strictly monotone increasing.

In some cases, like the squared Euclidean distance (or squared Mahalanobis distances), we find closed-form solutions for the Burbea-Rao barycenters.
For example, consider the (negative) quadratic entropy $F(x)=\innerproduct{x}{x} = \sum_{i=1}^d (x^{(i)})^2$ with weights $w_i$ and all $\alpha_i=\frac{1}{2}$ (non-skew  symmetric Burbea-Rao divergences).
We have:

\begin{eqnarray}\label{eq:solve}
& & \min E(x)=\frac{F(x)}{2}-\sum_{i=1}^n w_i F\left(\frac{p_i+x}{2}\right), \\
& = & \min \frac{\innerproduct{x}{x}}{2}-\frac{1}{4} \sum_{i=1}^n w_i\left(\innerproduct{x}{x} + 2  \innerproduct{x}{p_i} + \innerproduct{p_i}{p_i} \right)\nonumber
\end{eqnarray}

The minimum is obtained when the gradient $\nabla E(x)=0$, that is when $x=\bar p=\sum_{i=1}^n w_i p_i$, the barycenter of the point set $\mathcal{P}$.
For most Burbea-Rao divergences, Eq.~\ref{eq:solve} can only be solved numerically.

Observe that for extremal skew cases (for $\alpha\rightarrow 0$ or $\alpha\rightarrow 1$), we obtain the Bregman centroids in closed-form solutions (see Eq.~\ref{eq:asbreg}). 
Thus skew Burbea-Rao centroids allow one to get a smooth transition from the right-sided centroid (the center of mass) to the left-sided centroid (a quasi-arithmetic mean $M_f$ obtained for $f=\nabla F$, a continuous and strictly increasing function).

\begin{theorem}
Skew Burbea-Rao centroids are unique.
They can be estimated iteratively using the CCCP iterative algorithm.
In extremal skew cases,  the Burbea-Rao centroids tend to Bregman left/right sided centroids, and have closed-form equations in limit cases.
\end{theorem}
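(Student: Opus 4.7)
The plan is to handle the three claims (uniqueness, CCCP convergence, extremal limits) in sequence, following the structure suggested by the energy decomposition already sketched in the section.

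First I would strip $L(x)=\sum_i w_i\,\BR_F^{(\alpha_i)}(x,p_i)$ of terms independent of $x$ to obtain the energy
\begin{equation*}
E(c) \;=\; A\,F(c)\;-\;\sum_{i=1}^n w_i\, F(\alpha_i c+(1-\alpha_i)p_i), \qquad A=\sum_{i=1}^n w_i\alpha_i>0.
\end{equation*}
Since $F$ is strictly convex and $A>0$, the term $A\,F(c)$ is strictly convex in $c$; each summand $c\mapsto F(\alpha_i c+(1-\alpha_i)p_i)$ is a composition of a convex function with an affine map, hence convex, so its negation is concave. This yields the canonical CCCP splitting $E=E_{\mathrm{cvx}}+E_{\mathrm{ccv}}$. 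Invoking the Yuille--Rangarajan scheme, the iteration defined by $\nabla E_{\mathrm{cvx}}(c_{t+1})=-\nabla E_{\mathrm{ccv}}(c_t)$ reproduces exactly the update rule already displayed. The standard CCCP descent lemma (linearizing the concave part at $c_t$ produces a tight global majorant whose minimizer, obtained in closed form by inverting $\nabla F$, is $c_{t+1}$) then gives the monotone decrease $E(c_{t+1})\le E(c_t)$ and convergence to a stationary point.

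For uniqueness, any stationary point $c$ satisfies the fixed-point identity
\begin{equation*}
\nabla F(c)\;=\;\frac{1}{A}\sum_{i=1}^n w_i\alpha_i\,\nabla F\bigl(\alpha_i c+(1-\alpha_i)p_i\bigr).
\end{equation*}
Assuming two distinct fixed points $c,c'$, I would subtract the two identities and apply the integral mean-value form for $\nabla F$ on each relevant segment to obtain an equation of the shape $M\,(c-c')=\tfrac{1}{A}\sum_i w_i\alpha_i^{\,2}\,M_i\,(c-c')$, where $M=\int_0^1\nabla^2 F(c_s)\,ds$ and the $M_i$ are analogous positive-definite averaged Hessians along the segments joining $c$ to the $y_i=\alpha_i c+(1-\alpha_i)p_i$. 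Because $\alpha_i\in(0,1)$, we have $\sum_i w_i\alpha_i^{\,2}<\sum_i w_i\alpha_i=A$, so in a norm adapted to these positive-definite matrices the right-hand side acts as a strict contraction on $c-c'$, forcing $c=c'$. Making this contraction rigorous \emph{globally} in the multivariate case, i.e.\ coupling Hessian integrals evaluated at distinct base points into a single compatible quadratic form, is the main obstacle, and is where I would invest the most care (for instance, by restricting to a sublevel set of $E$ on which $\nabla^2 F$ has uniform bounds, or by replaying the argument in the $\nabla F$-image space where the map becomes affine in its main term).

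Finally, the extremal limits follow from the preceding Lemma together with the stability of minimizers under $\alpha$-rescaling of a strictly convex perturbation. For $\alpha_i\to 0^+$, the rescaling $\alpha^{-1}\BR_F^{(\alpha)}(\cdot,p_i)\to B_F(\cdot,p_i)$ gives $\sum_i w_i B_F(\cdot,p_i)$ in the limit, whose unique minimizer is the left-sided Bregman centroid $(\nabla F)^{-1}\!\bigl(\sum_i w_i \nabla F(p_i)/\sum_i w_i\bigr)$, the quasi-arithmetic mean $M_{\nabla F}$. For $\alpha_i\to 1^-$, the rescaling $(1-\alpha)^{-1}\BR_F^{(\alpha)}(\cdot,p_i)\to B_F(p_i,\cdot)$ gives $\sum_i w_i B_F(p_i,\cdot)$, minimized by the center of mass $\sum_i w_i p_i/\sum_i w_i$, independent of $F$. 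Both closed-form conclusions can equivalently be read off by passing to the limit in the CCCP update, which collapses to a single explicit step in each extremal regime.
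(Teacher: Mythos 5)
Your convex--concave decomposition of the energy, the CCCP update obtained by inverting $\nabla F$, and the identification of the extremal limits (after division by $\alpha$, resp.\ $1-\alpha$, the objective tends to a sum of left-, resp.\ right-sided Bregman divergences, giving the quasi-arithmetic mean $M_{\nabla F}$ and the center of mass) all coincide with what the paper does, and the extra majorization detail for monotone descent is consistent with the CCCP references the paper invokes. The genuine gap is exactly where you flag it: uniqueness, which is the central claim of the theorem and not a technicality. From the identity $M\,(c-c')=\frac{1}{A}\sum_i w_i\alpha_i^{2}\,M_i\,(c-c')$ you cannot conclude $c=c'$: the matrices $M$ and $M_i$ are Hessian averages taken along \emph{different} segments (from $c'$ to $c$, and from $\alpha_i c'+(1-\alpha_i)p_i$ to $\alpha_i c+(1-\alpha_i)p_i$), so for non-quadratic $F$ the spectra of the $M_i$ may exceed that of $M$ by an arbitrarily large factor, and the scalar inequality $\sum_i w_i\alpha_i^{2}<A$ by itself yields no contraction. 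Even after restricting to a sublevel set on which $\lambda I\preceq\nabla^2F\preceq\Lambda I$, the factor you obtain is of the form $(\Lambda/\lambda)\,\sum_i w_i\alpha_i^{2}/A$, which need not be smaller than $1$; so the argument as sketched does not close.

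The paper takes a different (and much shorter) route here and attempts no contraction estimate at all: since $\nabla^2F\succ 0$, the gradient $\nabla F$ is strictly monotone increasing, and the stationarity condition --- equivalently the CCCP update at a fixed point --- is read in the $\nabla F$-representation, where the centroid appears as a quasi-arithmetic mean with generator $\nabla F$ of the points $\alpha_i c+(1-\alpha_i)p_i$; well-definedness and uniqueness are then asserted from this fixed-point characterization and the strict monotonicity of the map (the paper's argument is admittedly terse, but that is the proof on record). So to align with the paper you would replace your contraction paragraph by this fixed-point/quasi-arithmetic-mean argument, or else supply the missing global comparison of Hessians evaluated at distinct base points; the CCCP and limit portions of your proposal can stand as they are (the limit step, like the paper's, implicitly uses that minimizers of the rescaled objectives pass to the limit, which is worth one sentence of justification).
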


To describe the orbit of Burbea-Rao centroids linking the left to right sided Bregman centroids, we compute for $\alpha\in [0,1]$ the skew Burbea-Rao centroids with the following update scheme:

\begin{equation}
c_{t+1} = \nabla F^{-1} \left(
\sum_{i=1}^n w_i \nabla F(\alpha c_t + (1-\alpha)p_i)
 \right)
\end{equation}

We may further consider various convex generators $F_i$ for each point, and consider the updating scheme
{\small
\begin{eqnarray*}
\lefteqn{c_{t+1}= }\\ 
& &  \left(\sum_i w_i \nabla F_i\right)^{-1} \left( 
\frac{1}{ \sum_{i=1}^n {w_i\alpha_i} } \sum_{i=1}^n w_i\alpha_i \nabla F_i (\alpha_i c_t+ (1-\alpha_i) p_i)
\right)
\end{eqnarray*}
}

\subsection{Burbea-Rao divergences of a population}

Consider now the Burbea-Rao divergence of a population $p_1, ..., p_n$ with respective positive normalized weights $w_1, ..., w_n$.
The Burbea-Rao divergence is defined by:

\begin{equation}
\BR_F^w(p_1, ..., p_n) =  \sum_{i=1}^n w_i F(p_i) - F(\sum_{i=1}^n w_ip_i)  \geq 0
\end{equation}

This family of diversity measures includes the Jensen-R\'enyi divergences~\cite{jensenrenyi-isar-2001,JensenRenyi-CSPL328-2001} for $F(x)=-R_{\alpha}(x)$, where $R_\alpha(x)=\frac{1}{1-\alpha}\log \sum_{j=1}^d p_j^{\alpha}$ is the R\'enyi entropy of order $\alpha$.
(R\'enyi entropy is concave for $\alpha\in(0,1)$ and tend to Shannon entropy for $\alpha\to 1$.) 


\section{Bhattacharyya distances as Burbea-Rao distances}\label{sec:bhatbj}

We first briefly recall the versatile class of exponential family distributions in Section~\ref{sec:ef}.
Then we show in Section~\ref{sec:chernoff} that the statistical Bhattacharyya/Chernoff distances between exponential family distributions amount to compute a Burbea-Rao divergence.

\subsection{Exponential family distribution in Statistics}\label{sec:ef}

Many usual statistical parametric distributions $p(x;\lambda)$ (e.g., Gaussian, Poisson, Bernoulli/multinomial, Gamma/Beta, etc.) share common properties arising from their common
canonical decomposition of probability distribution~\cite{WainwrightJordan-2008}:

\begin{equation}
p(x;\lambda)=p_F(x;\theta)=\exp \left(\innerproduct{t(x)}{\theta}-F(\theta)+k(x)\right). \label{eq:ef}
\end{equation}

Those distributions\footnote{The distributions can either be discrete or continuous. We do not introduce the unifying framework of probability measures in order to not burden the paper.} are said to belong to the exponential families (see~\cite{ef-flashcards-2009} for a tutorial).
An exponential family is characterized by its {\it log-normalizer} $F(\theta)$, and a distribution in  that family by its {\it natural parameter} $\theta$ belonging to the {\it natural space} $\Theta$.
The log-normalizer $F$ is strictly convex and  $C^\infty$, and can also be expressed using the source coordinate system $\lambda$ using the 1-to-1 map $\tau: \Lambda \rightarrow \Theta$ that converts parameters from the source coordinate system $\lambda$ to the natural coordinate system $\theta$:

\begin{equation}
F(\theta)= F(\tau(\lambda)) = (F\circ\tau) (\lambda)= F_\lambda(\lambda),
\end{equation}
where $F_\lambda=F\circ\tau$ denotes the log-normalizer function expressed using the $\lambda$-coordinates instead of the natural $\theta$-coordinates.

The vector $t(x)$ denote the {\it sufficient statistics}, that is the set of linear independent functions that allows to concentrate without any loss all information about the parameter $\theta$ carried in the iid. observations $x_1, x_2, ...,$ .  
The inner product $\innerproduct{p}{q}$ is defined according to the primitive type of $\theta$. 
Namely, it is a multiplication $\innerproduct{p}{q}=pq$ for scalars, a dot product $\innerproduct{p}{q}=p^T q$ for vectors, a matrix trace  
$\innerproduct{p}{q}=\tr{p^T\times q}=\tr{p\times q^T}$ for matrices, etc.
For composite types such as $p$ being defined by both a vector part and a matrix part, the composite inner product is defined as the sum of inner products on the primitive types.
Finally, $k(x)$ represents the carrier measure according to the counting or Lebesgue measures. 
Decompositions for most common exponential family distributions are given in~\cite{ef-flashcards-2009}.
An exponential family $\E_F=\{p_F(x;\theta) \ | \theta\in\Theta\}$ is the set of probability distributions obtained for the same log-normalizer function $F$. Information geometry considers $\E_F$ as a manifold entity, and study its differential geometric properties~\cite{informationgeometry-2000}.

For example, consider the family of Poisson distributions $\E_F$ with mass function:

\begin{equation}
p(x;\lambda)=\frac{\lambda^x}{x!} \exp(-\lambda),
\end{equation}
for $x\in\mathbb{N}_+=\mathbb{N}\cup\{0\}$ a positive integer. 
Poisson distributions are univariate exponential families ($x\in\mathbb{N}_+$) of order $1$ (parameter $\lambda$).
The canonical decomposition yields

\begin{itemize}
\item the sufficient statistic $t(x)=x$,
\item $\theta=\log\lambda$, the natural parameter,
\item $F(\theta)=\exp\theta$, the log-normalizer,
\item and $k(x)=-\log x!$ the carrier measure (with respect to the counting measure).
\end{itemize}

Since we deal with applications using multivariate normals in the following, we also report explicitly that canonical decomposition for the multivariate Gaussian family $\{p_F(x;\theta)\ | \theta\in\Theta \}$. We rewrite the usual Gaussian density of mean $\mu$ and variance-covariance matrix $\Sigma$:

\begin{eqnarray}
p(x;\lambda) &= & p(x;\mu,\Sigma)\\
& =& \frac{1}{2\pi\sqrt{\det\Sigma}} \exp \left( - \frac{(x-\mu)^T\Sigma^{-1} (x-\mu))}{2} \right)
\end{eqnarray}

in the canonical form of Eq.~\ref{eq:ef} with,
\begin{itemize}
\item $\theta=(\Sigma^{-1}\mu, \frac{1}{2}\Sigma^{-1})\in \Theta=\mathbb{R}^d\times\mathbb{K}_{d\times d}$, with $\mathbb{K}_{d\times d}$ denotes the cone of positive definite matrices,

\item $F(\theta)=\frac{1}{4}\tr{\theta_2^{-1} \theta_1\theta_1^T}-\frac{1}{2}\log\det \theta_2+\frac{d}{2}\log\pi$,

\item $t(x)=(x,-x^Tx)$,

\item $k(x)=0$.
\end{itemize}

In this case, the inner product is composite and is calculated as the sum of a dot product and a matrix trace as follows:
\begin{equation}
\innerproduct{\theta}{\theta'}=\theta_1^T\theta_1'+\tr{\theta_2^T \theta_2'}.
\end{equation}
The coordinate transformation $\tau: \Lambda \rightarrow \Theta$ is  given for $\lambda=(\mu,\Sigma)$ by

\begin{equation}
\tau(\lambda)= \left(\lambda_2^{-1}\lambda_1, \frac{1}{2}\lambda_2^{-1}\right),
\end{equation}
and its inverse mapping $\tau^{-1}: \Theta \rightarrow \Lambda $  by

\begin{equation}
\tau^{-1}(\theta) = \left(\frac{1}{2}\theta_2^{-1}\theta_1, \frac{1}{2}\theta_2^{-1} \right).
\end{equation}

\subsection{Bhattacharyya/Chernoff coefficients and $\alpha$-divergences as skew Burbea-Rao divergences}\label{sec:chernoff}
For arbitrary probability distributions $p(x)$ and $q(x)$ (parametric or not), we measure the amount of overlap between those distributions using the Bhattacharyya coefficient~\cite{Bhatta1943}:

\begin{equation}
C_(p,q)=\int \sqrt{p(x)q(x)} \dx,
\end{equation}
Clearly, the Bhattacharyya coefficient (measuring the affinity between distributions~\cite{Matusita-1955}) falls in the unit range:
\begin{equation}
0\leq C(p,q)\leq 1.
\end{equation}
In fact, we may interpret this coefficient geometrically by considering $\sqrt{p(x)}$ and $\sqrt{q(x)}$ as unit vectors.
The Bhattacharyya distance is then the dot product, representing the cosine of the angle made by the two unit vectors.
The Bhattacharyya distance $B: \mathcal{X}\times\mathcal{X} \rightarrow \mathbb{R}^+$ is derived  from its
coefficient~\cite{Bhatta1943} as 

\begin{equation}
B(p,q)=-\ln C(p,q).
\end{equation}

The Bhattacharyya distance allows one to get {\it both} upper and lower bound the Bayes' classification error~\cite{KailathITCT67,Bhat1998}, while there are no such results for  the symmetric Kullback-Leibler divergence.
Both the Bhattacharyya distance and the symmetric Kullback-Leibler divergence agrees with the Fisher information at the infinitesimal level.
Although the Bhattacharyya distance is symmetric, it is not a metric.
Nevertheless, it can be metrized by transforming it into to the following Hellinger metric~\cite{hellinger-1907}:

\begin{equation}
H(p,q)=
\sqrt{\frac{1}{2} \int (\sqrt{p(x)}-\sqrt{q(x)})^2 \dx} ,
\end{equation}

such that $0\leq H(p,q)\leq 1$.
It follows that

\begin{eqnarray}
\lefteqn{H(p,q) =} \nonumber\\
& &  \sqrt{\frac{1}{2} \left(\int p(x)\dx+\int q(x)\dx -2\int \sqrt{p(x)} \sqrt{q(x)} \dx\right)}\nonumber\\
& = & \sqrt{1-C(p,q)}.
\end{eqnarray}

Hellinger metric is also called Matusita metric~\cite{Matusita-1955} in the literature.
The thesis of Hellinger was emphasized in the work of Kakutani~\cite{kakutani-1948}.

We consider a direct generalization of Bhattacharyya coefficients and divergences called Chernoff divergences\footnote{In the literature, Chernoff information is also defined as  $-\log \inf_{\alpha\in [0,1]}\int p^{\alpha}(x) q^{1-\alpha}(x) \dx $. Similarly, Chernoff coefficients $C_\alpha(p,q)$ are defined as the supremum: 
$C_\alpha(p,q)=\sup_{\alpha\in [0,1]} \int p^{\alpha}(x)q^{1-\alpha}(x)\dx$.
 }

\begin{eqnarray}
B_\alpha(p,q) & = & -\ln \int_x p^{\alpha}(x) q^{1-\alpha}(x) \dx = -\ln C_\alpha(p,q)\\
 & = & -\ln \int_x q(x) \left(\frac{p(x)}{q(x)}\right)^{\alpha} \dx \\
  & = & -\ln E_q[ L^\alpha(x)]
\end{eqnarray}
 defined for some  $\alpha\in(0,1)$ (the Bhattacharyya divergence is obtained  for $\alpha=\frac{1}{2}$), where $E[\cdot]$ denote the expectation, and $L(x)=\frac{p(x)}{q(x)}$ the likelihood ratio.
 The term $\int_x p^{\alpha}(x) q^{1-\alpha}(x) \dx$ is called the Chernoff coefficient.
 The Bhattacharyya/Chernoff distance of members of the same exponential family yields a weighted asymmetric Burbea-Rao divergence (namely, a {\it skew} Burbea-Rao divergence):

\begin{equation}
B_\alpha(p_F(x;\theta_p),p_F(x;\theta_q))  =  \BR_F^{(\alpha)}(\theta_p,\theta_q)
\end{equation}
with
\begin{equation}
\BR_F^{(\alpha)}(\theta_p,\theta_q) =  \alpha F(\theta_p)+ (1-\alpha) F(\theta_q) - F(\alpha\theta_p+(1-\alpha)\theta_q)
\end{equation}
 
Chernoff coefficients are also related to $\alpha$-divergences, the canonical divergences in $\alpha$-flat spaces in information geometry~\cite{informationgeometry-2000} (p. 57):

{\small

\begin{equation}
D_\alpha(p||q) = \left \{
\begin{array}{lr}
 \frac{4}{1-\alpha^2} \left( 1 - \int p(x)^{\frac{1-\alpha}{2}} q(x)^{\frac{1+\alpha}{2}} \dx \right), &  \alpha\not = \pm 1,\\
 \int p(x)\log \frac{p(x)}{q(x)} \dx= \KL(p,q) , & \alpha=-1,\\
 \int q(x)\log \frac{q(x)}{p(x)} \dx= \KL(q,p) , & \alpha=1,\\
 \end{array}
 \right .
\end{equation}
}

The class of $\alpha$-divergences satisfy the following reference duality: $D_\alpha(p||q)=D_{-\alpha}(q||p)$.
Remapping $\alpha'=\frac{1-\alpha}{2}$ ($\alpha=1-2\alpha'$), we transform Amari $\alpha$-divergences to Chernoff $\alpha'$-divergences:\footnote{
Chernoff coefficients are also related to R\'enyi $\alpha$-divergence generalizing the Kullback-Leibler divergence:
$R_{\alpha}(p||q) =  \frac{1}{\alpha-1} \log \int_x p(x)^\alpha q^{1-\alpha}(x) \dx$ built on 
 R\'enyi  entropy $H_R^\alpha(p)  =  \frac{1}{1-\alpha}\log (\int_x p^\alpha(x)\dx -1)$.
The Tsallis entropy $H_T^\alpha(p)=\frac{1}{\alpha-1}(1-\int p(x)^\alpha \dx)$
can also be obtained from the R\'enyi entropy (and vice-versa) via the mappings:
$H_T^\alpha(p)  =  \frac{1}{1-\alpha} (e^{(1-\alpha)H_R^\alpha(p)}-1)$ and
$H_R^\alpha(p) =  \frac{1}{1-\alpha} \log (1+(1-\alpha) H_T^\alpha(p))$.
}

{\small
\begin{equation}
D_{\alpha'}(p,q) = \left \{
\begin{array}{lr}
 \frac{1}{\alpha' (1-\alpha')} \left( 1 - \int p(x)^{\alpha'} q(x)^{1-\alpha'} \dx \right), &  \alpha'\not\in\{0,1\},\\
 \int p(x)\log \frac{p(x)}{q(x)} \dx =\KL(p,q) , & \alpha'=1,\\
 \int q(x)\log \frac{q(x)}{p(x)} \dx =\KL(q,p) , & \alpha'=0,\\
 \end{array}
 \right .
\end{equation}
}

\begin{theorem}
The Chernoff $\alpha'$-divergence ($\alpha\not =\pm 1$) of distributions belonging to the same exponential family is given in closed-form by means of a  skewed Burbea-Rao divergence as:  $D_{\alpha'}(p,q)= \frac{1}{\alpha' (1-\alpha')} (1-e^{-\BR_F^{\alpha'}(\theta_p,\theta_q)})$,
with  $\BR_F^{(\alpha)}(\theta_p,\theta_q)= (\alpha F(\theta_p)-(1-\alpha)F(\theta_q)) - F(\alpha\theta_p-(1-\alpha)\theta_q)$.
Amari $\alpha$-divergence for members of the same exponential families amount to compute
$D_{\alpha}(p,q)= \frac{4}{1-\alpha^2} (1-e^{-\BR_F^{\left(\frac{1-\alpha}{2}\right)}(\theta_p,\theta_q)})$
\end{theorem}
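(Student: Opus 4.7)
The plan is to reduce the theorem to a single integral identity, namely the closed-form evaluation of the Chernoff coefficient $C_\alpha(p,q)=\int p^\alpha q^{1-\alpha}\,\dx$ when $p,q$ lie in the same exponential family $\E_F$. Once that identity is in hand, the two displayed formulas for $D_{\alpha'}$ and $D_\alpha$ follow purely by algebraic manipulation from the definitions recalled in \S\ref{sec:chernoff}.

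\textbf{Step 1: Compute $p^\alpha q^{1-\alpha}$ from the canonical decomposition.} Substitute $p=p_F(x;\theta_p)$ and $q=p_F(x;\theta_q)$ using Eq.~\ref{eq:ef}. Because the exponent is linear in the parameter, the linear combination $\alpha\langle t(x),\theta_p\rangle+(1-\alpha)\langle t(x),\theta_q\rangle$ collapses to $\langle t(x),\alpha\theta_p+(1-\alpha)\theta_q\rangle$, and the carrier measure $k(x)$ combines via $\alpha k(x)+(1-\alpha)k(x)=k(x)$. The log-normalizer pieces $-\alpha F(\theta_p)-(1-\alpha)F(\theta_q)$ do not depend on $x$ and factor out of the integral.

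\textbf{Step 2: Recognize another member of $\E_F$.} Since $\Theta$ is convex (a standard property of the natural parameter space of an exponential family), the mixed parameter $\bar\theta\equaldef\alpha\theta_p+(1-\alpha)\theta_q$ again belongs to $\Theta$. Adding and subtracting $F(\bar\theta)$ in the exponent rewrites $p^\alpha q^{1-\alpha}$ as
\[
p^\alpha q^{1-\alpha} = \exp\!\bigl(F(\bar\theta)-\alpha F(\theta_p)-(1-\alpha)F(\theta_q)\bigr)\, p_F(x;\bar\theta).
\]
Integrating and using $\int p_F(x;\bar\theta)\,\dx=1$, the Chernoff coefficient becomes exactly $\exp(-\BR_F^{(\alpha)}(\theta_p,\theta_q))$, where $\BR_F^{(\alpha)}$ is the skew Jensen gap from \S\ref{sec:sbr}. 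This is the heart of the theorem; the only genuine obstacle is checking integrability, which follows from $\bar\theta\in\Theta$.

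\textbf{Step 3: Assemble the two stated formulas.} Plug the Chernoff coefficient identity into the definition of the Chernoff $\alpha'$-divergence recalled just before the theorem: for $\alpha'\notin\{0,1\}$,
\[
D_{\alpha'}(p,q)=\frac{1}{\alpha'(1-\alpha')}\bigl(1-C_{\alpha'}(p,q)\bigr)=\frac{1}{\alpha'(1-\alpha')}\bigl(1-e^{-\BR_F^{(\alpha')}(\theta_p,\theta_q)}\bigr).
\]
The Amari version follows by the reparametrization $\alpha'=\tfrac{1-\alpha}{2}$ already displayed in the excerpt, which converts the prefactor $\tfrac{1}{\alpha'(1-\alpha')}$ into $\tfrac{4}{1-\alpha^2}$ and the skew parameter into $\tfrac{1-\alpha}{2}$, yielding
\[
D_\alpha(p,q)=\frac{4}{1-\alpha^2}\bigl(1-e^{-\BR_F^{((1-\alpha)/2)}(\theta_p,\theta_q)}\bigr).
\]

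\textbf{Step 4: Handle the boundary cases $\alpha=\pm 1$.} These should drop out by continuity: Lemma 1 of the paper already shows $\tfrac{1}{\alpha}\BR_F^{(\alpha)}(\theta_p,\theta_q)\to B_F(\theta_p,\theta_q)$ as $\alpha\downarrow 0$ and $\tfrac{1}{1-\alpha}\BR_F^{(\alpha)}\to B_F(\theta_q,\theta_p)$ as $\alpha\uparrow 1$. Expanding $1-e^{-\BR_F^{(\alpha')}}\sim \BR_F^{(\alpha')}$ in the limit recovers the Bregman divergences on natural parameters, which coincide with $\KL(p,q)$ and $\KL(q,p)$ as recorded for instance in~\cite{BregmanVoronoi-2010}, matching the $\alpha=\mp 1$ branches in the definition. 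The only subtle point is justifying the interchange of limit and expansion, which is immediate here because $\BR_F^{(\alpha)}\to 0$ as $\alpha\to 0,1$ so $1-e^{-\BR_F^{(\alpha)}}=\BR_F^{(\alpha)}+O(\BR_F^{(\alpha)2})$ uniformly in a neighborhood.
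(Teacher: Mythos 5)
Your proposal is correct and follows essentially the same route as the paper: both compute the Chernoff coefficient $C_\alpha(p,q)=\exp\left(-\BR_F^{(\alpha)}(\theta_p,\theta_q)\right)$ by absorbing the convex combination of natural parameters into the exponent, adding and subtracting $F(\alpha\theta_p+(1-\alpha)\theta_q)$ so the integrand becomes a normalized member of the same family, and then substituting into the divergence definitions with the reparametrization $\alpha'=\frac{1-\alpha}{2}$. Your explicit appeal to convexity of the natural parameter space (and your use of the full canonical form with $t(x)$ and $k(x)$) merely makes rigorous what the paper leaves implicit in its reduced-form computation, and your Step 4 reproves what the paper states separately as a corollary.
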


\begin{figure*}[!t]
\normalsize
Let us compute the Chernoff coefficient for distributions belonging to the same exponential families.
Without loss of generality, let us   consider the reduced canonical form of exponential families 
$p_F(x;\theta)=\exp \innerproduct{x}{\theta}-F(\theta)$.
Chernoff coefficients $C_\alpha(p,q)$ of members $p=p_F(x;\theta_p)$  and $q=p_F(x;\theta_q)$ of the {\it same} exponential family $\E_F$:
\begin{eqnarray*}
C_\alpha(p,q) &= &  \int p^{\alpha}(x) q^{1-\alpha}(x) \dx = \int p_F^{(\alpha)}(x;\theta_p) p_F^{1-\alpha}(x;\theta_q) \dx\\
 & = & \int \exp (\alpha(\innerproduct{x}{\theta_p}-F(\theta_p))) \times \exp ((1-\alpha)(\innerproduct{x}{\theta_q}-F(\theta_q))) \dx\\
 & = & \int \exp \left( \Innerproduct{x}{\alpha\theta_p+(1-\alpha)\theta_q}-(\alpha F(\theta_p)+(1-\alpha)F(\theta_q) \right)\dx\\
 & = & \exp -(\alpha F(\theta_p)+(1-\alpha)F(\theta_q)) \times \int \exp \left( \innerproduct{x}{\alpha\theta_p+(1-\alpha)\theta_q}-F(\alpha\theta_p+(1-\alpha)\theta_q)+F(\alpha\theta_p+(1-\alpha)\theta_q) \right) \dx\\
 & = & \exp \left( F(\alpha\theta_p+(1-\alpha)\theta_q) - (\alpha F(\theta_p)+(1-\alpha)F(\theta_q) \right) \times \int \exp \innerproduct{x}{\alpha\theta_p+(1-\alpha)\theta_q}-F(\alpha\theta_p+(1-\alpha)\theta_q)\dx \\
 & = & \exp \left( F(\alpha\theta_p+(1-\alpha)\theta_q) - (\alpha F(\theta_p)+(1-\alpha)F(\theta_q)\right) \times \underbrace{\int p_F(x;\alpha\theta_p+(1-\alpha)\theta_q)\dx}_{=1}\\
 & = & \exp (-\BR_F^{(\alpha)}(\theta_p,\theta_q)) \geq 0.
\end{eqnarray*}
\hrulefill
\vspace*{4pt}
\end{figure*}

We get the following theorem for Bhattacharyya/Chernoff distances:
\begin{theorem}
The skew Bhattacharyya divergence $B_\alpha(p,q)$ is equivalent to the Burbea-Rao divergence for members of the same exponential family $\E_F$:
$
B_\alpha(p,q) = B_\alpha(p_F(x;\theta_p),p_F(x;\theta_q)) = -\log C_\alpha(p_F(x;\theta_p),p_F(x;\theta_q)) =  \BR_F^{(\alpha)}(\theta_p,\theta_q) \geq 0
$.
\end{theorem}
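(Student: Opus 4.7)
The plan is to compute the Chernoff coefficient $C_\alpha(p_F(x;\theta_p),p_F(x;\theta_q))$ directly from the canonical decomposition of exponential families and show it equals $\exp(-\BR_F^{(\alpha)}(\theta_p,\theta_q))$; taking negative logarithms then yields the claim, and non-negativity follows from Jensen's inequality on $F$ (already established for $\BR_F^{(\alpha)}$ in Section~\ref{sec:sbr}).

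First, I would substitute the canonical form $p_F(x;\theta)=\exp(\innerproduct{t(x)}{\theta}-F(\theta)+k(x))$ into $\int p_F^\alpha(x;\theta_p) p_F^{1-\alpha}(x;\theta_q) \dx$ and use linearity of the inner product inside the exponential to collect the $t(x)$-terms as $\innerproduct{t(x)}{\alpha\theta_p+(1-\alpha)\theta_q}$ and the carrier terms as $(\alpha + (1-\alpha))k(x) = k(x)$. The normalizer terms $-\alpha F(\theta_p)-(1-\alpha)F(\theta_q)$ are independent of $x$ and can be pulled outside the integral.

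Second, the key algebraic trick is to add and subtract $F(\alpha\theta_p+(1-\alpha)\theta_q)$ inside the exponent so that what remains under the integral is exactly the density $p_F(x;\alpha\theta_p+(1-\alpha)\theta_q)$. This uses the fact that $\Theta$ is convex (since $F$ is the log-normalizer of an exponential family, its domain is convex), so $\alpha\theta_p+(1-\alpha)\theta_q \in \Theta$ is itself a valid natural parameter. The integral of this density over the support equals $1$, leaving only the prefactor $\exp\bigl(F(\alpha\theta_p+(1-\alpha)\theta_q)-\alpha F(\theta_p)-(1-\alpha)F(\theta_q)\bigr) = \exp(-\BR_F^{(\alpha)}(\theta_p,\theta_q))$.

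Taking $-\ln$ of both sides gives $B_\alpha(p,q)=\BR_F^{(\alpha)}(\theta_p,\theta_q)$, and since $F$ is strictly convex, Jensen's inequality gives $\BR_F^{(\alpha)}\ge 0$, with equality iff $\theta_p=\theta_q$. The main subtlety, rather than an obstacle, is ensuring that the midpoint parameter $\alpha\theta_p+(1-\alpha)\theta_q$ lies in $\Theta$ so that the ``integrates to one'' step is legitimate; this is immediate from convexity of the natural parameter space of any exponential family. No integrability issue arises for $\alpha\in(0,1)$ since the convex combination of log-normalizers is finite whenever both $F(\theta_p)$ and $F(\theta_q)$ are.
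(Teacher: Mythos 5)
Your proof is correct and follows essentially the same route as the paper: compute the Chernoff coefficient $C_\alpha$ from the canonical form, add and subtract $F(\alpha\theta_p+(1-\alpha)\theta_q)$ in the exponent so the integrand becomes the density $p_F(x;\alpha\theta_p+(1-\alpha)\theta_q)$ integrating to one, and take $-\ln$. You are in fact slightly more careful than the paper, which works ``without loss of generality'' with the reduced form $\exp(\innerproduct{x}{\theta}-F(\theta))$, whereas you explicitly track the $t(x)$ and $k(x)$ terms and note the convexity of $\Theta$ needed for the integrates-to-one step.
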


In particular, for $\alpha=\pm 1$, the Kullback-Leibler divergence of those exponential family distributions amount to compute a 
{\it Bregman divergence}~\cite{BregmanVoronoi-2010} (by taking the limit as $\alpha\rightarrow 1$ or $\alpha\rightarrow 0$).

\begin{corollary}
In the limit case $\alpha'\in\{0,1\}$, the $\alpha'$-divergences amount to compute a Kullback-Leibler divergence, and is equivalent to compute a Bregman divergence for the log-normalized on the swapped natural parameters: $\KL(p_F(x;\theta_p),p_F(x;\theta_q))=B_F(\theta_q,\theta_p)$.
\end{corollary}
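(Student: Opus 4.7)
The plan is to obtain the corollary by combining Theorem~2 (which rewrites the Chernoff/$\alpha'$-divergence of exponential-family members as a skew Burbea-Rao divergence on natural parameters) with Lemma~1 (which identifies skew Burbea-Rao divergences asymptotically with Bregman divergences as $\alpha\to 0$ or $\alpha\to 1$), and then passing to the limit $\alpha'\to 1$. The only genuinely new ingredient beyond those two results is a single-variable first-order expansion of $1-e^{-x}$ near $x=0$; everything else is bookkeeping of which parameter is which.

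Concretely, first I record the two expressions for $D_{\alpha'}(p,q)$ supplied by the paper: for $\alpha'\notin\{0,1\}$, Theorem~2 gives $D_{\alpha'}(p,q)=\tfrac{1}{\alpha'(1-\alpha')}\bigl(1-e^{-\BR_F^{(\alpha')}(\theta_p,\theta_q)}\bigr)$, while for $\alpha'=1$ the piecewise definition of $D_{\alpha'}$ gives $D_1(p,q)=\KL(p,q)$. Equating them in the limit will produce the identity. Next I observe that as $\alpha'\to 1$ the skew Burbea-Rao divergence $\BR_F^{(\alpha')}(\theta_p,\theta_q)$ tends to $0$ (it vanishes identically at the endpoints $\alpha'\in\{0,1\}$), so the expansion $1-e^{-x}=x+O(x^{2})$ yields the simplification $\lim_{\alpha'\to 1}\tfrac{1}{\alpha'(1-\alpha')}\bigl(1-e^{-\BR_F^{(\alpha')}(\theta_p,\theta_q)}\bigr)=\lim_{\alpha'\to 1}\tfrac{1}{1-\alpha'}\BR_F^{(\alpha')}(\theta_p,\theta_q)$. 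I then invoke Eq.~\ref{eq:asbreg} of Lemma~1, applied with the substitution $(p,q)\leftarrow(\theta_p,\theta_q)$, to recognize this last limit as $B_F(\theta_q,\theta_p)$. Combining the two identifications of $\lim_{\alpha'\to 1}D_{\alpha'}(p,q)$ produces $\KL\!\bigl(p_F(x;\theta_p),p_F(x;\theta_q)\bigr)=B_F(\theta_q,\theta_p)$, which is the stated identity with the expected swap of natural parameters. The symmetric limit $\alpha'\to 0$, together with the reference duality $D_{\alpha'}(p,q)=D_{1-\alpha'}(q,p)$, recovers the dual identity $\KL\!\bigl(p_F(x;\theta_q),p_F(x;\theta_p)\bigr)=B_F(\theta_p,\theta_q)$ as a consistency check.

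The one step deserving care is the interchange of the limit $\alpha'\to 1$ with the integral defining $D_{\alpha'}(p,q)$, so that $\lim_{\alpha'\to 1}D_{\alpha'}(p,q)$ indeed equals $\KL(p,q)$ and not some other boundary value. For exponential families this is not a real obstacle: the integrand $p(x)^{\alpha'}q(x)^{1-\alpha'}$ depends analytically on $\alpha'$ in a neighborhood of $1$ and is dominated (on the natural-parameter domain) by a fixed integrable envelope, so dominated convergence applies and the limit matches the piecewise definition of $D_{\alpha'}$. Once this continuity is in hand, the proof reduces to the two-line Taylor expansion above, and no new calculation on the generator $F$ is needed beyond what Lemma~1 already supplies.
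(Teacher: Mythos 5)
Your proof is correct and follows essentially the same route as the paper's: both use the closed form $D_{\alpha'}(p,q)=\frac{1}{\alpha'(1-\alpha')}\bigl(1-e^{-\BR_F^{(\alpha')}(\theta_p,\theta_q)}\bigr)$, expand $1-e^{-x}\simeq x$ as $\BR_F^{(\alpha')}\to 0$, and then invoke Lemma~1 (Eq.~\ref{eq:asbreg}) to identify $\lim_{\alpha'\to 1}\frac{1}{1-\alpha'}\BR_F^{(\alpha')}(\theta_p,\theta_q)=B_F(\theta_q,\theta_p)$. Your extra paragraph justifying $\lim_{\alpha'\to 1}D_{\alpha'}(p,q)=\KL(p,q)$ merely makes explicit a continuity-in-$\alpha'$ step that the paper asserts without comment, so it is a refinement rather than a departure.
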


\begin{proof}
The proof relies on the equivalence of Burbea-Rao divergences to Bregman divergences for extremal  values of $\alpha\in\{0,1\}$.

\begin{eqnarray}
\KL(p,q) &= & \KL(p_F(x;\theta_p), p_F(x;\theta_q))\\
  & =  &  \lim_{\alpha'\rightarrow 1} D_{\alpha'}(p_F(x;\theta_p), p_F(x;\theta_q)) \\
& = & \lim_{\alpha'\rightarrow 1} \frac{1}{\alpha'(1-\alpha')} (1-\underbrace{C_\alpha(p_F(x;\theta_p), p_F(x;\theta_q))}_{\mbox{since\ }\exp x \simeq_{x\simeq 0} 1+x}) \nonumber\\
& = & \lim_{\alpha'\rightarrow 1} \frac{1}{\alpha'(1-\alpha')} \underbrace{\BR_F^{\alpha'}(\theta_p,\theta_q)}_{(1-\alpha')B_F(\theta_q,\theta_p)} \\
& = & \lim_{\alpha'\rightarrow 1}  \frac{1}{\alpha'} B_F(\theta_q,\theta_p) = B_F(\theta_q,\theta_p) 
\end{eqnarray}

Similarly, we have $\lim_{\alpha'\rightarrow 0} D_{\alpha'}(p_F(x;\theta_p), p_F(x;\theta_q))=\KL(p_F(x;\theta_q), p_F(x;\theta_p))= B_F(\theta_p,\theta_q)$.
\end{proof}

Table~\ref{tab:exp:bhat} reports the Bhattacharyya distances for members of the same exponential families.

\begin{table*}
\centering
\renewcommand{\arraystretch}{1.5}
\begin{tabular}{llll}
Exponential family & 
$\tau : \lambda\rightarrow \theta$ & $F(\theta)$ (up to a constant) & Bhattacharyya/Burbea-Rao  $\BR_F(\lambda_p,\lambda_q)=\BR_F(\tau(\lambda_p), \tau(\lambda_q))$
 \\ \hline \hline
Multinomial & 
$(\log \frac{p_i}{p_d})_i$
&
$\log (1+\sum_{i=1}^{d-1} \exp\theta_i) $
&
$ -\ln \sum_{i=1}^d \sqrt{p_iq_i}$ \\
Poisson & 
$\log\lambda$
&
$
\exp \theta
$
&
$\frac{1}{2}(\sqrt{\mu_p}-\sqrt{\mu_q})^2$ \\
Gaussian &
$(\theta_1=\mu,\theta_2=\sigma^2)$
&
$-\frac{\theta_1^2}{4\theta_2}+\frac{1}{2}\log (-\frac{\pi}{\theta_2})$
&
 $\frac{1}{4} \frac{(\mu_p-\mu_q)^2}{\sigma_p^2+\sigma_q^2} +
\frac{1}{2} \ln \frac{\sigma_p^2+\sigma_q^2}{2\sigma_p\sigma_q}  $ \\
Multivariate Gaussian &
$(\theta=\Sigma^{-1}\mu,\Theta=\frac{1}{2}\Sigma^{-1})$
&
$
\frac{1}{4}\tr{\Theta^{-1}\theta\theta^T}-\frac{1}{2}\log\det\Theta 
$
&
 $\frac{1}{8} (\mu_p-\mu_q)^T
\left(\frac{\Sigma_p+\Sigma_q}{2} \right)^{-1} (\mu_p-\mu_q) +
\frac{1}{2}\ln \frac{\det \frac{\Sigma_p+\Sigma_q}{2}}{\det\Sigma_p
\det\Sigma_q} $
\end{tabular}

\caption{Closed-form Bhattacharyya distances for some classes of exponential families (expressed in source parameters for ease of use)).}\label{tab:exp:bhat}
\end{table*}


\subsection{Direct method for calculating the Bhattacharyya centroids of multivariate normals}\label{sec:alt}

To the best of our knowledge, the Bhattacharyya centroid has only  been studied for univariate Gaussian or diagonal multivariate Gaussian distributions~\cite{RigazioICASSP00} in the context of speech recognition, where it is reported that it can be estimated  using an iterative algorithm (no convergence guarantees are reported in~\cite{RigazioICASSP00}).

In order to compare this scheme on multivariate data with our generic Burbea-Rao scheme,
we extend the approach of Rigazio et al.~\cite{RigazioICASSP00} to multivariate Gaussians.
Plugging the Bhattacharyya distance of Gaussians in the energy function of the optimization problem (OPT), we get
\begin{eqnarray}
L(c)&=&\sum_{i=1}^{n} \frac{1}{8} \left( \mu_c - \mu_i\right)^T \left(\frac{\Sigma_c+\Sigma_i}{2}\right)^{-1} ( \mu_c - \mu_i )\nonumber\\ &+&\  \frac{1}{2}\log \left( { \det \left( \frac{\Sigma_c+\Sigma_i}{2} \right) \over \sqrt{\det \Sigma_c \det \Sigma_i}}  \right).  
\end{eqnarray}

This  is equivalent to minimize the following energy:
\begin{eqnarray}
F(c)&=&\sum_{i=1}^{n}  \left( \mu_c - \mu_i\right)^T \left(\Sigma_c+\Sigma_i\right)^{-1} ( \mu_c - \mu_i )\  \nonumber\\
&+&\  2\log \left( \det (\Sigma_c+\Sigma_i) \right) - \log \left(  \det  \Sigma_c   \right) \nonumber\\ 
&-& \log \left( 2^{2d} \det  \Sigma_i  \right).  
\end{eqnarray}
In order to minimize $F(c)$, let us differentiate with respect to $\mu_c$.
 let $U_i$ denote $\left(\Sigma_c+\Sigma_i\right)^{-1}$.
Using matrix differentials~\cite{Petersen04} (p.10 Eq. 73), we get: 
\begin{equation}
\frac{\partial L}{\partial \mu_c} = \sum_{i=1}^{n} \left[ U_i+U_i^T\right] [\mu_c - \mu_i]   
\end{equation}

Then one can estimate iteratively $\mu_c$, since $U_i$ depends on  $\Sigma_c$ which is unknown.
We update $\mu_c$ as follows:
\begin{equation}
\mu_c(t+1)=\left[\sum_{i=1}^{n} \left[ U_i+U_i^T\right]\right]^{-1}\left[\sum_{i=1}^{n} \left[ U_i+U_i^T\right]\mu_i\right]
\label{eq:bha:mu}
\end{equation}
Now let us estimate $\Sigma_c$. 
We used matrix differentials~\cite{Petersen04} (p.9 Eq. 55 for the first term, and  Eq. 51 p.8  for the two others): 
\begin{eqnarray}
\frac{\partial L}{\partial \Sigma_c}&=&\sum_{i=1}^{n} -U_i^T \left( \mu_c - \mu_i\right) \left( \mu_c - \mu_i\right)^T U_i^T \nonumber \\
&+& 2\sum_{i=1}^{n}  U_i^T - \sum_{i=1}^{n} \Sigma_c^{-T}. 
\end{eqnarray}
Taken into account the fact that $\Sigma_c$ is symmetric, differential calculus on symmetric matrices can be simply estimate:
\begin{equation}
\frac{d L}{d \Sigma_c} = \frac{\partial L}{\partial \Sigma_c}  + \left[\frac{\partial L}{\partial \Sigma_c}\right]^T -  \mathrm{diag}\left(\frac{\partial L}{\partial \Sigma_c}\right).
\end{equation}
Thus, if one notes 
\begin{equation}
A=  \sum_{i=1}^{n} 2 U_i ^T  -U_i^T \left( \mu_c - \mu_i\right) \left( \mu_c - \mu_i\right)^T U_i^T
\end{equation}
and recalling that $\Sigma_c$ is symmetric, one has to solve 
\begin{equation}
n (2 \Sigma_c^{-1}  - \mathrm{diag}(\Sigma_c^{-1})) = A + A^T - \mathrm{diag}(A).
\end{equation}
Let 
\begin{equation}
B = A + A^T - \mathrm{diag}(A)
\end{equation}
Then one can estimate $\Sigma_c$ iteratively as follows:
\begin{equation}
\Sigma_c^{(k+1)}= 2n \left[ ( B^{(k)} +  \mathrm{diag}(B^{(k)}))  \right ]^{-1}
\label{eq:bha:cov}
\end{equation}

Let us now compare the two generic Burbea-Rao/tailored Gaussian methods for computing the Bhattacharyya centroids on multvariate Gaussians.

\subsection{Applications to mixture simplification in statistics}\label{sec:clustering}

Simplifying Gaussian mixtures is important in many applications arising in signal processing~\cite{2010-Hierachical-ICCASP}.
Mixture simplification is also a crucial step when one wants to study the Riemannian geometry induced by the Rao distance with respect to the Fisher metric: The set of mixture models need to have the same number of components, so that we simplify source mixtures to get a set of Gaussian mixtures with prescribed size.
We adapt the hierarchical clustering algorithm of Garcia et al.~\cite{2010-Hierachical-ICCASP} by replacing the symmetrized Bregman centroid (namely, the Jeffreys-Bregman centroid) by the 
Bhattacharyya centroid.
We consider the task of color image segmentation by learning a Gaussian mixture model for each image.
Each image is represented as a set of $5D$ points (color $RGB$ and position $xy$).

The first experimental results depicted in Figure~\ref{fig:res:bhc} demonstrates the {\it qualitative stability} of the clustering performance. 
In particular, the  hierarchical clustering with respect to the Bhattacharrya distance performs qualitatively much better on the last {\tt colormap} image.\footnote{See reference images and segmentation using Bregman centroids at \url{http://www.informationgeometry.org/MEF/}}

\begin{figure*}
\centering
\def\wfig{0.20\linewidth}
\begin{tabular}{lcccc}
(a) & \includegraphics[bb=0 0 120 120, width=\wfig]{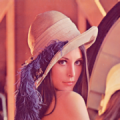}
&\includegraphics[bb=0 0 120 120,width=\wfig]{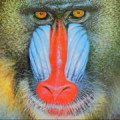}
&\includegraphics[bb=0 0 120 120,width=\wfig]{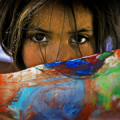}
&\includegraphics[bb=0 0 120 120,width=\wfig]{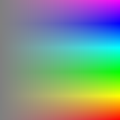}\vspace{0.3pt}\\
(b) &\includegraphics[bb=0 0 120 120,width=\wfig]{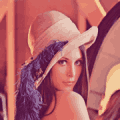}
&\includegraphics[bb=0 0 120 120,width=\wfig]{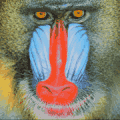}
&\includegraphics[bb=0 0 120 120,width=\wfig]{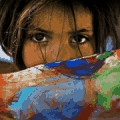}
&\includegraphics[bb=0 0 120 120,width=\wfig]{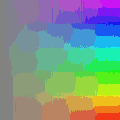}\vspace{0.3pt}\\
(c)&\includegraphics[bb=0 0 120 120,width=\wfig]{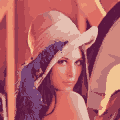}
&\includegraphics[bb=0 0 120 120,width=\wfig]{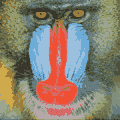}
&\includegraphics[bb=0 0 120 120,width=\wfig]{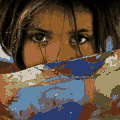}
&\includegraphics[bb=0 0 120 120,width=\wfig]{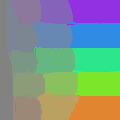}
\end{tabular}
\caption{Color image segmentation results:  (a) source images, (b) segmentation with $k=48$ 5D Gaussians, and (c) segmentation with $k=16$  5D Gaussians.}\label{fig:res:bhc}
\end{figure*}

The second experiment focuses on characterizing the numerical convergence of the generic Burbea-Rao method compared to the tailored Gaussian method.
Since we presented two novel different schemes to compute the Bhattacharyya centroids of multivariate Gaussians, one wants to compare them, both in terms of stability and accuracy. 
Whenever the ratio of Bhattacharyya distance energy function between those estimated centroids is greater than $1\%$, we  consider that one of the two  estimation methods is beaten (namely, the method that gives the highest Bhattacharyya distance).
Among the $760$ centroids computed to generate  Figures~\ref{fig:res:bhc}, $100\%$ were correct with the Burbea-Rao approach, while
only $87\%$ were correct with the tailored multivariate Gaussian matrix optimization method. 
The average number of iterations to reach the $1\%$ accuracy is $4.1$ for  the Burbea-Rao estimation algorithm, and $5.2$ for the alternative method. 

Thus we experimentally checked that the generic CCCP iterative Burbea-Rao algorithm described for computing the Bhattacharrya centroids always converge, and moreover beats  another {\it ad-hoc} iterative method tailored for multivariate Gaussians.

\section{Concluding remarks}\label{sec:concl}
In this paper, we have shown that the Bhattacharrya distance for distributions of the same statistical exponential families can be computed equivalently as a Burbea-Rao divergence on the corresponding natural parameters. 
Those results extend to skew Chernoff coefficients (and Amari $\alpha$-divergences) and skew Bhattacharyya distances using the notion of skew Burbea-Rao divergences.
We proved that (skew) Burbea-Rao centroids are unique, and can be efficiently estimated using an iterative concave-convex procedure with guaranteed convergence. We have shown that extremally skewed Burbea-Rao divergences amount asymptotically to evaluate Bregman divergences.
This work emphasizes on the attractiveness of exponential families in Statistics.
Indeed,  it turns out that for many statistical distances, one can  evaluate them in closed-form.
For sake of brevity, we have not mentioned the recent $\beta$-divergences and $\gamma$-divergences~\cite{alphabetagamma-2010}, although their distances on exponential families are again available in closed-form.

The differential Riemannian geometry induced by the class of such Jensen difference measures was studied by Burbea and Rao~\cite{BurbeaRao-1982,BurbeaRao-higherorder-1982} who built quadratic differential metrics on probability spaces using Jensen differences. 
The Jensen-Shannon divergence is also an instance of a broad class of divergences called the  $f$-divergences.
A $f$-divergence $I_f$ is a statistical measure of dissimilarity defined by the functional $I_f(p,q)=\int p(x) f(\frac{q(x)}{p(x)}) \dx$.
It turns out that the Jensen-Shannon divergence is a $f$-divergence for the generator

\begin{equation}
f(x)=\frac{1}{2} \left( (x+1)\log\frac{2}{x+1} + x\log x \right).
\end{equation}
$f$-divergences preserve the information monotonicity~\cite{alphabetagamma-2010}, and their differential geometry was studied by Vos~\cite{vos-fdiv-1991}.
However, this  Jensen-Shannon divergence is a very particular case of Burbea-Rao divergences since the squared Euclidean distance (another Burbea-Rao divergence) does not belong to the class of $f$-divergences.

\section*{Source code}
The generic Burbea-Rao barycenter estimation algorithm shall be released in the \jMEF{} open source library:\\

\centerline{\url{http://www.informationgeometry.org/MEF/}}

An applet visualizing the skew Burbea-Rao centroids ranging from the right-sided to left-sided Bregman centroids is available at:
\centerline{\url{http://www.informationgeometry.org/BurbeaRao/}}

\section*{Acknowledgments}
We gratefully acknowledge financial support from French agency DIGITEO (GAS 2008-16D) and French National Research Agency (ANR GAIA 07-BLAN-0328-01), and Sony Computer Science Laboratories, Inc. 
We are very grateful to the reviewers for their thorough and thoughtful comments and suggestions.
In particular, we are thankful to the anonymous Referee that pointed out a rigorous proof of Lemma~1.

\appendix

\section*{Proof of uniqueness of the Burbea-Rao centroids\label{sec:proof}}

Consider without loss of generality the Burbea-Rao centroid (also called Jensen centroid) defined as the minimizer of

$$
c=\arg\min_x \sum_{i=1}^n \frac{1}{n} J_F(p_i,x)
$$
where $J_F(p,q)= \frac{F(p)+F(q)}{2} - F(\frac{p+q}{2}) \geq 0$.
For sake of simplicity, let us consider univariate generators.
The Jensen divergence may not be convex as 
$J'(x,p)= \frac{F'(x)}{2}-\frac{1}{2}F(\frac{x+p}{2})$ and 
$J''(x,p) =\frac{1}{2}F''(x) - \frac{1}{4}F''(\frac{x+p}{2})$ can be alternatively positive/negative (see~\cite{SkewJensen-2011}).
In general, minimizing the average non-convex divergence may {\it a priori} yield to many local minima~\cite{ManyLocalMinima-1995}.
It is remarkable to observe that the centroid induced by a Jensen divergence is unique although the problem may not be convex.

The proof of uniqueness of the Burbea-Rao centroid and the convergence of the CCCP approximation algorithm  rely on the ``interness'' property  (called compensativeness\footnote{A  fact following from the monotonicity of the generator function $\nabla F$.} in~\cite{Aggregator-1998}) of quasi-arithmetic means: 

$$
\min_{i=1}^n p_i \leq M_{\nabla F}(p_1, ..., p_n)  \leq \max_{i=1}^n p_i,
$$
with 
$$
M_{\nabla F}(p_1, ..., p_n)=(\nabla F)^{-1} \left(\sum_{i=1}^n \frac{1}{n} \nabla F(p_i) \right)
$$ 
for a strictly convex function $F$ (and hence, strictly monotone increasing gradient $\nabla F$).
The interness property of quasi-arithmetic means ensures that it is indeed a {\it mean} value contained within the extremal values.

For sake of simplicity, let us first consider a univariate convex generator $F$ with the $p_i$'s following the increasing order: 
$p_1\leq ... \leq p_n$.
Let initially $c_0\in [p_1^{(0)}=p_1, p_n^{(0)}=p_n]$.
Since $c_1=M_{\nabla F}( p_1^{(1)}=\frac{p_1+c_0}{2}, ..., p_n^{(1)}= \frac{p_n+c_0}{2})$ is a quasi-arithmetic mean, we necessarily have
 $c_1\in [p_1^{(1)}, p_n^{(1)}]$ and 
$p_n^{(1)}-p_1^{(1)} = \frac{c_0+p_n-c_0-p_1}{2} = \frac{p_n-p_1}{2}$.
Thus the CCCP iterations induce a sequence of iterated quasi-arithmetic means $c_t$ such that

\begin{eqnarray*}
c_t &=& M_{\nabla F}\left(p_1^{(t)}=\frac{p_1^{(t-1)}+c_{t-1}}{2}, ..., p_n^{(t)}=\frac{p_n^{(t-1)}+c_{t-1}}{2}\right), \\
c_t &\in& [p_1^{(t)}, p_n^{(t)}]
\end{eqnarray*}
with 
\begin{eqnarray*}
p_n^{(t)}-p_1^{(t)} &=& \frac{c_{t-1}+p_n^{(t-1)}-c_{t-1}-p_1^{(t-1)}}{2},\\
& =& \frac{p_n^{(t-1)}-p_1^{(t-1)}}{2},\\
&=&\frac{1}{2^t} (p_n-p_1).
\end{eqnarray*}

It follows that the sequence of centroid approximation $c_t$ converges in the limit to a unique  centroid $c^*$.
That is, the Burbea-Rao centroids exist and are unique for any strictly convex generator $F$.
The centroid can be approximated within $\frac{1}{2^t}$ relative precision after $t$ iterations (linear convergence of the CCCP).
Since the CCCP iterations yield both an approximation $c_t$ and a range $[p_1^{(t)}, p_n^{(t)}]$ where $c_t$ should be at the $t$-iteration, 
 we choose in practice to stop iterating whenever $\frac{p_n^{(t)}-p_1^{(t)}}{p_n^{(t)}}$ goes below a prescribed threshold 
(for example, taking $\nabla F=\log x$, we find in about $50$ iterations the centroid with machine precision $10^{-12}$).
The CCCP algorithm with $b$ bits precision require $O(nb)$ time to approximate.

Note that $\lim_{t\rightarrow\infty} p_1^{(t)} =\lim_{t\rightarrow\infty} \frac{1}{2}\sum_{i=0}^t \frac{1}{2^i} p_1 = p_1$ (and similarly, we have $\lim_{t\rightarrow\infty} p_n^{(t)}=p_n$).
It follows that $c^*\in [p_1,p_n]$ as expected (all the initial extremal range is possible, and the center shall depend on the chosen generator $F$).
The proof extends naturally to separable multivariate functions by carrying the analysis on each dimension independently.



\begin{IEEEbiography}{Frank Nielsen}
received the BSc (1992) and MSc (1994)
degrees from Ecole Normale Superieure (ENS Lyon, France).
He prepared  his PhD  on adaptive
computational geometry at INRIA
Sophia-Antipolis (France) and defended it in 1996. 
As a civil servant of the University of Nice (France), he
gave lectures at the engineering schools ESSI
and ISIA (Ecole des Mines). In 1997, he served
in the army as a scientific member in the computer science laboratory of
Ecole Polytechnique. In 1998, he joined Sony Computer Science
Laboratories Inc., Tokyo (Japan) where he is senior researcher. 
He became a professor of the CS Dept. of Ecole Polytechnique in 2008.
His current research
interests include geometry, vision, graphics, learning, and optimization.
He is a senior ACM and senior IEEE member.
\end{IEEEbiography}

\begin{IEEEbiography}{Sylvain Boltz}
Sylvain Boltz received the M.S. degree and the
Ph.D. degree in computer vision from the University
of Nice-Sophia Antipolis, France, in 2004 and 2008,
respectively.
Since then, he has been a postdoctoral fellow at
the VisionLab, University of California, Los Angeles and
a LIX-Qualcomm postdoctoral fellow in Ecole Polytechnique, France.
His research spans computer vision and image, video
processing with a particular interest in applications of
information theory and compressed sensing to these
areas.
\end{IEEEbiography}

\end{document}